\tikzstyle{fsquare} = [rectangle,inner sep=2pt,outer sep=3pt,fill]
\tikzstyle{fcircle} = [circle,inner sep=1.5pt,outer sep=3pt,fill]
\tikzstyle{hcircle} = [circle,inner sep=1.5pt,outer sep=3pt,draw]
\newtheorem{thm}{Theorem}[section]
\newtheorem{lem}[thm]{Lemma}
\newtheorem{cor}[thm]{Corollary}
\theoremstyle{remark}
\newtheorem{remark}{Remark}
\theoremstyle{definition}
\newtheorem{definition}[thm]{Definition}
\newtheorem{case}{Case}
\newtheorem{subcase}{Case}
\newenvironment{customproblem}[3]{
        \begin{mdframed}[frametitle={\textsc{#1}}]
                \begin{tabular}{p{0.12\columnwidth} p{.8\columnwidth}}
                        \textbf{Instance:} & #2 \\
                        \textbf{Task:} & #3
                \end{tabular}
        \end{mdframed}
}
\numberwithin{equation}{section}
\newcommand{\bN}{\mathbb{N}}
\newcommand{\bR}{\mathbb{R}}
\newcommand{\cT}{\mathcal{T}}
\newcommand{\cA}{\mathcal{A}}
\DeclareMathOperator{\cost}{cost}
\DeclareMathOperator{\OPT}{OPT}
\newcommand{\nosemic}{\renewcommand{\@endalgocfline}{\relax}}% Drop semi-colon ;
\newcommand{\dosemic}{\renewcommand{\@endalgocfline}{\algocf@endline}}% Reinstate semi-colon ;
\let\oldnl\nl% Store \nl in \oldnl
\newcommand{\nonl}{\renewcommand{\nl}{\let\nl\oldnl}}% Remove line number for one line
\begin{document}

\title{Further Improvements on  Approximating  the \\Uniform Cost-Distance Steiner Tree Problem}

\author{Stephan Held and   Yannik Kyle Dustin Spitzley  ~\vspace{10pt} \\
  Research Institute for Discrete Mathematics and \\Hausdorff Center for Mathematics\\ University of Bonn ~\vspace{10pt}\\
\texttt{\href{mailto:held@dm.uni-bonn.de}{held@dm.uni-bonn.de} \href{mailto:spitzley@dm.uni-bonn.de}{spitzley@dm.uni-bonn.de}}}
\date{November 6, 2022}

\maketitle

\begin{abstract}
  In this paper, we consider the \textsc{Uniform Cost-Distance
    Steiner Tree Problem}  in  metric spaces,   a generalization of the well-known
  Steiner tree problem. Cost-distance Steiner trees
  minimize the sum of  the total length and the weighted path lengths
  from a dedicated root to the other terminals, which have a weight to penalize the path length.
  They are applied when the tree is intended for signal transmission, e.g.\
  in chip design or telecommunication networks, and
  the signal speed through the tree has to be considered besides the total length.

  Constant factor approximation algorithms for the uniform cost-distance Steiner tree problem have been
  known since the first mentioning of the problem by Meyerson,
  Munagala, and Plotkin \cite{meyerson2008cost}.  Recently, the
  approximation factor was improved from $2.87$ to $2.39$ by Khazraei
  and Held \cite{Khazraei21}.  We refine  their approach further and
  reduce the approximation factor down to $2.15$.
\end{abstract}

~\newpage

\section{Introduction}

Steiner trees can be found in numerous applications, in particular in
chip  design and telecommunications.  In these applications,
both the total tree length and the signal speed are important.  We
consider Steiner trees that do not only minimize the total cost, but also the weighted path lengths from a dedicated root $r$ to the
other terminals. These weights arise for instance  as Lagrangean multipliers
when optimizing global signal delay constraints on an integrated circuit \cite{held2017global}.
Formally, the  problem is defined as follows.

\begin{customproblem}
{Uniform Cost-Distance Steiner Tree Problem
}{
        A metric space $(M,c)$, a root $ r$, a finite set $ T $ of sinks, a map $p:T\dot{\cup}\{r\}\to M$,  sink delay weights $ w\colon T\to \bR _{\geq 0} $.
}{
        Find a Steiner tree $ A $ for $ T\cup \{ r\} $ with an extension $p: V(A) \setminus (T\cup\{r\}) \to M$ minimizing
        \begin{align}
                \label{eq98}
                \sum _{\{x,y\}\in E(A)} c(p(x),p(y)) + \sum _{t\in T} \left(w(t)\!\!\!\! \sum_{\{x,y\} \in E(A_{[r,t]})} \!\!\!\!c(p(x),p(y))\right),
        \end{align}

        where $ A_{[r,t]} $ is the unique $ r $-$ t $-path in $ A $.
}
\end{customproblem}

Given a Steiner tree $A$, we call
$$\sum _{\{x,y\}\in E(A)} c(p(x),p(y)) $$ its \textbf{connection cost}
and $$\sum _{t\in T} \left(w(t)\!\!\!\! \sum_{\{x,y\} \in E(A_{[r,t]})} \!\!\!\!c(p(x),p(y))\right)$$ its \textbf{delay cost}.

Usually the position $p$ of vertices  is clear from the
context. Then, we simply write $c(x,y)$ instead of $c(p(x),p(y))$ and
$c(e)$ instead of $c(x,y)$ for edges $\{x,y\}$.
After orienting its edges, we can consider $A$ as an $r$-arborescence, reflecting the
signal direction in the underlying  applications. We often use arborescences instead of trees to simplify the notation.

A simple lower bound for the objective function, against which we will compare, is given by
\begin{align}
  \label{eq:lower_bound}
  C_{SMT} + \sum _{t\in T} w(t)  c(r,t),
\end{align}
where $ C_{SMT}$ is the connection cost  of minimum-length Steiner tree for $T\cup\{r\}$.

The \textsc{Uniform Cost-Distance Steiner Tree Problem}  was first mentioned by \cite{meyerson2008cost}, who considered the
(general) cost-distance Steiner tree problem, where the connection cost
is measured using a different metric than the delay cost.  The general
cost-distance Steiner tree problem does not permit an approximation factor better
than $\Omega(\log \log |T|)$ unless $\text{NP} \subseteq
\text{DTIME}(|T|^{\mathcal{O}(\log\log\log |T|)})$ \cite{Chuzhoy08}, while a randomized
$\mathcal{O}(\log |T|$)-factor approximation algorithm was given by \cite{meyerson2008cost} and
 a derandomization in \cite{Chekuri01}.

Meyerson, Munagala and Plotkin \cite{meyerson2008cost} observed  that a constant factor approximation
algorithm can be obtained for the \textsc{Uniform Cost-Distance Steiner Tree Problem}  using the  shallow-light spanning tree algorithm from \cite{khuller1995balancing}.
The resulting factor is $3.57$. Using shallow-light Steiner trees \cite{held2013shallow} instead of spanning trees, the factor was improved
to $2.87$ independently by \cite{guo2014approximating} and \cite{rotter2017timing}.
So far the best factor  was given by Khazraei and Held \cite{Khazraei21}:

\begin{thm}[Khazraei and Held \cite{Khazraei21}]
        \label{theo99}
        The \textsc{Uniform Cost-Distance Steiner Tree Problem} can be solved in polynomial time within an approximation factor of $ (1+\beta ) $, where $ \beta $ is the approximation ratio for computing a minimum length Steiner tree in $(M,c)$.
\end{thm}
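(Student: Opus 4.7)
My plan is to compute a $\beta$-approximate Steiner tree $S$ for $T \cup \{r\}$ using the best available Steiner tree approximation algorithm (giving $c(S) \leq \beta \cdot C_{SMT}$), and then convert $S$ into an arborescence $A$ whose total cost matches the claimed bound against the lower bound \eqref{eq:lower_bound}. The conversion inserts direct shortcut edges $(r,t)$ for carefully chosen terminals whose $r$-to-$t$ path in $S$ is too long relative to $c(r,t)$.

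Concretely, I would root $S$ at $r$, viewing it as an $r$-arborescence, so each terminal $t$ has a path $\pi_t$ of length $\ell_t(S)$. A natural construction visits the terminals in an order induced by $S$ (say a depth-first traversal), inserting a shortcut whenever the accumulated path length would exceed some threshold of the form $(1+\tau) c(r,t)$. This yields a shallow-light style Steiner tree, tuned for the cost-distance objective. The analysis then bounds the connection cost of $A$ by $c(S)$ plus the cost of inserted shortcuts, and the delay cost by $\sum_{t \in T} w(t) \cdot \mathrm{dist}_A(r,t)$. An amortized charging argument should attribute each shortcut's cost $c(r,t)$ to the segments of $S$ it skips, so the total extra connection cost is at most $C_{SMT}$, while the delay cost is at most a constant times $\sum_{t \in T} w(t) c(r,t)$; summing will give $c(S) + C_{SMT} + (1+\beta) \sum_{t \in T} w(t) c(r,t) \leq (1+\beta)(C_{SMT} + \sum_{t\in T} w(t) c(r,t))$.

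The main obstacle I anticipate is the charging scheme and the right choice of threshold $\tau$. Previous analyses using shallow-light Steiner trees \cite{held2013shallow,guo2014approximating,rotter2017timing} balance the connection and delay overheads via a single parameter and obtain at best $\max\bigl((1+2/\tau)\beta,\; 1+\tau\bigr)$, which minimizes to roughly $2.87$ for $\beta \approx 1.39$. To push this down to $(1+\beta)$, the analysis must avoid the separate worst-case trade-off: intuitively, an instance forcing $(1+\tau)$ delay overhead has $\sum_t w(t) c(r,t)$ dominating the lower bound, while an instance forcing $(1+2/\tau)\beta$ connection overhead has $C_{SMT}$ dominating, and both cannot occur simultaneously. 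I expect the proof to exploit this either via a continuous family of candidate trees parameterized by $\tau$ (returning the cheapest), via an instance-adaptive choice of $\tau$, or via a probabilistic argument whose expectation is bounded by $(1+\beta)$ times the lower bound. The remaining work is then the routine verification that such a construction is implementable in polynomial time, given only black-box access to the Steiner tree approximation.
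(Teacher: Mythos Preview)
Your proposal has a genuine gap. You stay within the shallow-light framework (cut by path length, insert a shortcut whenever the $r$--$t$ detour exceeds $(1+\tau)c(r,t)$) and, as you yourself note, this yields at best $\max\{(1+2/\tau)\beta,\,1+\tau\}\approx 2.87$. Your suggestions for closing the gap --- an instance-adaptive $\tau$, a continuous family of trees, or a probabilistic argument --- remain speculative and in fact do not suffice: even with $\tau$ chosen optimally from $C_{SMT}$ and $D$, the bound $(1+2/\tau)\beta C_{SMT}+(1+\tau)D$ minimizes to $\beta C_{SMT}+D+2\sqrt{2\beta C_{SMT}D}$, whose worst ratio against $C_{SMT}+D$ still exceeds $1+\beta$ (for $\beta=1$ it is $1+\sqrt{2}>2$). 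The shallow-light analysis bounds the connection overhead and the delay stretch \emph{separately} per edge and per sink, and this is precisely what blocks the factor $(1+\beta)$.

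The paper's proof (following \cite{Khazraei21}, summarized in Algorithm~\ref{alg:khazraei21} and Lemma~\ref{lem99}) uses a different cutting rule: it traverses the $\beta$-approximate arborescence bottom-up and severs the incoming edge of $y$ whenever the remaining sub-arborescence below $y$ has total \emph{delay weight} exceeding a threshold $\mu$, irrespective of path lengths. Each cut-off piece $A'$ with $|T'|\ge 2$ then satisfies $\mu<W_{A'}\le 2\mu$ and is reconnected to $r$ through a single port terminal $t\in T'$. For the analysis one chooses $t$ with probability $w(t)/W_{A'}$; then the expected cost of the port edge is $D_{A'}/W_{A'}$, and the expected delay weight carried by any internal edge $(u,v)$ is $2W_{A'_v}(W_{A'}-W_{A'_v})/W_{A'}\le W_{A'}/2$. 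This gives cost at most $(1+W_{A'}/2)\,C_{A'}+(1+1/W_{A'})\,D_{A'}\le(1+\mu)\,C_{A'}+(1+1/\mu)\,D_{A'}$. Summing over all pieces and taking $\mu=1/\beta$ yields exactly $(1+\beta)(C_{SMT}+D)$. The missing idea is thus the switch from path-length thresholds to weight thresholds, together with the weight-proportional port selection.
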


The best known Steiner approximation factor in general metric spaces is $ \beta = \ln (4) + \epsilon $ \cite{Byrka13,traub2022local}. Then,  Theorem~\ref{theo99} yields an approximation factor of $2.39$.
For the Euclidean or Manhattan plane, $\beta = (1+\epsilon)$ is possible for any $\epsilon >0$ \cite{Arora98}, leading to
an approximation factor of $2+\epsilon$ for the \textsc{Uniform Cost-Distance Steiner Tree Problem} in these spaces.

Similar to algorithms for shallow-light trees, the algorithm in \cite{Khazraei21} starts from an approximately minimum Steiner tree,
which is cut into a forest whose components are connected to the root $r$ individually.
While \cite{khuller1995balancing}  cut the tree  whenever the path length is too large,
\cite{Khazraei21} cut off a subtree   if its delay weight exceeds a certain threshold.
Each cut-off tree is later re-connected through a direct connection/path from the source
through one of its terminals that minimizes the resulting objective function.

Cost-distance Steiner trees are heavily  used in
VLSI routing and interconnect optimization \cite{held2017global,daboul2019global} to achieve a low
power consumption and, simultaneously, a fast chip speed.

The \textsc{Uniform Cost-Distance Steiner Tree Problem} is related to the \textit{single-sink buy-at-bulk problem}
where a set of demands needs to be routed from a set of sources to a single sink using a pipe network that has to be constructed from
a finite set of possible pipe types with different costs and capacities \cite{guha2009constant,talwar2002single,jothi2009improved}.
The best known approximation factor for this problem is  $40.82$ due to \cite{grandoni2010network}, who
also achieve a factor of $20.41$ for the splittable case.
If there is only one pipe type this problem is equivalent to the  \textsc{Uniform Cost-Distance Steiner Tree Problem}.
In fact, the threshold-based tree cutting used to proof Theorem~\ref{theo99} in \cite{Khazraei21}  is similar to the algorithm in  \cite{guha2009constant},
but the re-connect to the root/sink differs.

\subsection{Our contribution}

In this paper we improve the approximation algorithm in  \cite{Khazraei21} for the  \textsc{Uniform Cost-Distance Steiner Tree Problem}.

We show a class of instances, where the approximation factor $(1+\beta)$ in  \cite{Khazraei21} is tight.
In this class  the trees that are cut off from the initial short Steiner tree have an unfavorable structure.
We cut such trees further into two or three parts to improve the quality.
Furthermore, we also improve the structure of the tree that contains the root $r$ after cutting.

We obtain an approximation factor of
$$  \beta + \frac{2b \beta}{\sqrt{4b\beta + (\beta - 1)^2} + \beta - 1}, $$
where  $b = \frac{1609\sqrt{1609} - 42427}{34992}$.
For $\beta = \ln(4)+\epsilon$ \cite{Byrka13} this results in an  approximation factor of 2.15
and for $\beta = 1$ this would give the factor $1.8$, clearly improving upon the corresponding factors
$2.39$ and $2.0$ in  \cite{Khazraei21}.

We can achieve the lower approximation ratio even with a faster running time.
While \cite{Khazraei21} obtain a running time of $\mathcal{O}(\Lambda + |T|^2)$,
where $\Lambda$ is the time to compute an initial $\beta$-approximate minimum Steiner tree,
our running time is  $\mathcal{O}(\Lambda + |T|)$ (assuming that the metric $c$ can be evaluated in constant time).

The remainder of this paper is structured as follows.
In Section~\ref{sec:approximation-algorithm} we will briefly summarize the algorithm behind Theorem~\ref{theo99}
and the main results used in its proof.
Then, in Section~\ref{sec:worst_case_example} we show an example that the analysis
in Theorem~\ref{theo99} is tight. A further example in the Manhattan plane is given in Appendix~\ref{appendix:tight-example-in-Manhattan-plane}.

This motivates the  improvement that we achieve by splitting the trees in   the aforementioned forest into up to two trees  in Section~\ref{sec:cut-into-two}.
In Section~\ref{sec:improve-root-component}, we show how to reduce cost of the component containing the root
after the cutting.
A splitting into up to three trees is described in Section~\ref{sec:cut-into-three} and leads
to a  further  improvement of the approximation factor at the cost of a more complex analysis.
Our main result is presented in Section~\ref{sec:main-result}, where  we combine
all improvement steps to obtain a  better  approximation algorithm for the
\textsc{Uniform Cost-Distance Steiner Tree Problem}.
We finish with conclusions in Section~\ref{sec:conclusion}.

\section{The $(1+\beta)$-approximation algorithm}
\label{sec:approximation-algorithm}

The algorithm in \cite{Khazraei21} is  shown in Algorithm~\ref{alg:khazraei21}.
It assumes $|T| \ge 2$, as $|T| = 1$ is trivially solved by a shortest path computation.

\begin{algorithm}[H]
  \textbf{Step 1 (Initial arborescence):}

  First, it computes a  $ \beta $-approximate minimum cost Steiner $r$-arborescence $A_0$ for $ T\cup \{ r\} $
  with outdegree 0 at all sinks in $T$ and outdegree 2 at all Steiner vertices in $ V(A_0)\setminus T$.
This can be achieved using  a $\beta$-approximate Steiner tree algorithm, orienting its edges to form
an $r$-arborescence and possibly introducing  new Steiner vertices at existing positions or replacing paths by single edges to obtain the desired leaf set and degree constraints without increasing the total length.

~

\textbf{Step 2 (split into branching):}

Then, it cuts $A_0$ into a branching as follows.
$A_0$ is traversed  bottom-up.
Whenever an edge $(x,y)\in E(A_0)$ is traversed, the algorithm checks
if the total sink delay weight in the sub-arborescence $(A_0)_y$ rooted at $y$ exceeds a certain threshold $\mu >
0$. That is, whenever $w(V((A_0)_y)\cap T) > \mu$, the edge $(x,y)$ is removed creating a new arborescence $(A_0)_y$ in the branching.
Let $A_r = (A_0)_r$ denote the final root component. In $A_r$, we delete edges and vertices until every leaf is a terminal in
$T\cup \{r\}$.

~

Let $\cA$ denote the set of all arborescences that were cut off from $A_0$  this way,
and let $\cT := \{ V(A')\cap T: A' \in \cA\}$ be the corresponding set of terminal sets.

~

\textbf{Step 3  (Re-connect arborescences):}

In a re-connect phase, each $A'\in \cA$ is  re-connected to $r$.
To this end, we select a vertex $t \in T' := V(A')\cap T$ that minimizes
the cost for serving the sinks in $T'$ through the $r$-arborescence $A'+(r,t)$,
i.e.\ we select a vertex $t \in T'$ as a ``\textbf{port}'' for $T'$  that minimizes
$$  c(r,t) + c(E(A'))  + \sum _{t'\in T'} w(t')\cdot (c(r,t) +  c(E(A'_{[t,t']}))).$$

Let $t_1,\dots,t_{|\cA|} \in T$ be the set of selected port vertices.
The algorithm returns  the union of the final branching and
the port connections $A_0  + \{(r,t_i):  i\in \{1,\dots,|\cA|\}$.

\caption{$(1+\beta)$-approximation algorithm by \cite{Khazraei21}.}
\label{alg:khazraei21}
\end{algorithm}

It should be noted that the threshold-based tree cutting in Step 2
can be replaced by an optimum cutting (w.r.t.  the subsequent Step 3)
 using  a polynomial time dynamic program \cite{dynamic-program-splitting}.
However, no better approximation factors have been proven for such an optimum subdivision.

\subsection{Notation}

We will use the following notation throughout this paper.
Let $A$ be an arborescence. By   $ A_v $ we denote the sub-arborescence rooted at $v$.
Furthermore, $ T_{A} := V(A)\cap T $ is the set of terminals in $A$,
$ W_{A} := w(T_{A}) $ is the sum of delay weights in $A$,
$ C_{A} := c(E(A)) $ is  the \textbf{connection cost} of $A$ and
$ D_{A} := D_{T_A} := \sum_{t\in T_{A}} w(t)  c(r,t) $ the \textbf{minimum possible delay cost} for connecting the sinks in  $ T_{A} $.

\begin{definition}
Removing an edge $e\in E(A)$  splits $A$ into two arborescences. By
 \begin{align}
  \omega_{A}
  := \max _{(x,y)\in E(A)} W_{A_y} (W_{A}-W_{A_y})
  \label{eq:omega_A}
\end{align}
we denote the maximum product of weights in  these two arborescences that
can be obtained by removing an edge from $A$. We call $  \omega_{A}$ the \textbf{balance} of $A$.
\label{def:omega_A}
\end{definition}

Informally, the balance $ \omega _A $ indicates how regularly the delay weights are distributed in the arborescence $ A $.
If $ \omega _A $ is small, then for each edge $ e = (x,y) $ the sum of the delay weights in one of the two arborescences  in $ A-e $ is significantly larger than in the other.
On the other hand, if $ \omega _A $ is large, there is an edge $ e = (x,y) $ such that the delay weight in the to components $ W_{A_y} $ and $ W_{A} - W_{A_y} $ are similar, i.e. balanced.

\subsection{Essential steps in the proof of Theorem~\ref{theo99}}

We quickly recap the essential steps in the analysis of \cite{Khazraei21}.
The cost to connect an arborescence $A'\in\cA$ to the root $ r $ can be estimated as follows:

\begin{lem}[Khazraei and Held \cite{Khazraei21}, Lemma 1]
  \label{lem99}
  Let $A'  \in \cA$  with terminal corresponding set $ T' $. By the choice of the port $ t\in T'$, the $r$-arborescence  $(A' + \{r,t\})$  has a total cost at most
  \begin{align}
    \label{eq99}
    \left( 1+\frac{W_{A'}}{2} \right) C_{A'} + \left( 1 + \frac{1}{W_{A'}}\right) D_{T'} \le     \left( 1+\mu \right) C_{A'} + \left( 1 + \frac{1}{\mu}\right) D_{T'}.
  \end{align}

  If $ |T'|\ge 2 $,   $A' + \{r,t\}$  has a total cost at most
  \begin{align}
        \label{eqn:kh_bound_multi-terminals}
    \left( 1 + \frac{2\omega_{A'}}{W_{A'}}\right) C_{A'} + \left( 1 + \frac{1}{W_{A'}}\right) D_{T'} \le     \left( 1+\mu \right) C_{A'} + \left( 1 + \frac{1}{\mu}\right) D_{T'}.
  \end{align}
\end{lem}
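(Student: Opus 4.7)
The plan is to apply an averaging argument across all potential ports: since $t$ is chosen to minimize the total cost, that minimum is at most the weighted average over $t \in T'$ under the probability distribution $\lambda(t) := w(t)/W_{A'}$. For a fixed port $t$, expanding the $r$-to-$t'$ path lengths in $A' + \{r,t\}$ gives
\begin{align*}
  f(t) \;=\; (1+W_{A'})\,c(r,t) \;+\; C_{A'} \;+\; \sum_{t' \in T'} w(t')\,c(E(A'_{[t,t']})),
\end{align*}
so it suffices to bound $\sum_{t \in T'} \lambda(t) f(t)$ and exploit $\min_t f(t) \le \sum_t \lambda(t) f(t)$.

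The first two summands are straightforward: $\sum_t \lambda(t)(1+W_{A'})c(r,t) = (1 + 1/W_{A'}) D_{T'}$ by definition of $D_{T'}$, and the $C_{A'}$ term is independent of $t$. For the remaining double sum $\sum_{t,t'} \lambda(t)\, w(t')\, c(E(A'_{[t,t']}))$ I would swap summation and account edge by edge: each $e \in E(A')$ splits $A'$ into two sub-arborescences with terminal weights $W_1(e)$ and $W_2(e) = W_{A'} - W_1(e)$, and $e$ lies on the path between $t$ and $t'$ exactly when they sit on opposite sides. Hence $e$ contributes $\tfrac{1}{W_{A'}} \cdot 2 W_1(e) W_2(e) \cdot c(e)$ to the weighted average, and the full double sum collapses to $\tfrac{2}{W_{A'}} \sum_{e \in E(A')} W_1(e) W_2(e)\, c(e)$.

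To finish the first bound (\ref{eq99}), I would apply the AM--GM estimate $W_1(e) W_2(e) \le W_{A'}^2/4$ edge by edge, which sums to $(W_{A'}/2)\, C_{A'}$. For the sharper bound (\ref{eqn:kh_bound_multi-terminals}), which requires $|T'|\ge 2$ so that $A'$ has an edge and $\omega_{A'}$ is well defined, I would instead use $W_1(e) W_2(e) \le \omega_{A'}$ straight from the definition of the balance, yielding $(2\omega_{A'}/W_{A'})\, C_{A'}$. Adding back $(1+1/W_{A'}) D_{T'}$ produces the two claimed left-hand sides. The rightmost estimates in terms of $\mu$ follow from the cutting rule: when the bottom-up traversal creates $A'$, the at most two child subtrees hanging off its root survived earlier with weight at most $\mu$ each, so $W_{A'} \le 2\mu$, while cutting was triggered at the edge above the root, so $W_{A'} > \mu$; together these give $W_{A'}/2 \le \mu$, $2\omega_{A'}/W_{A'} \le W_{A'}/2 \le \mu$, and $1/W_{A'} \le 1/\mu$. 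The main hurdle is organizing the terminal-pair double sum into a single sum over edges via the crossing identity; once that edge-accounting is in place, everything else reduces to the two elementary bounds on $W_1(e) W_2(e)$.
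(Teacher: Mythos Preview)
Your proposal is correct and follows essentially the same route as the paper: a weight-proportional averaging over the choice of port, followed by an edge-by-edge accounting of the path-length term, with the two bounds $W_1(e)W_2(e)\le W_{A'}^2/4$ and $W_1(e)W_2(e)\le \omega_{A'}$ yielding (\ref{eq99}) and (\ref{eqn:kh_bound_multi-terminals}) respectively. Your justification of the $\mu$-estimates via $W_{A'}\le 2\mu$ (from the two children of the Steiner root each having weight at most $\mu$) and $W_{A'}>\mu$ is exactly what the paper uses implicitly.
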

\begin{proof}
Note that Lemma 1 in \cite{Khazraei21}  states only the common right hand side bound.
The left hand side bounds follow immediately from their proof, which  we will briefly sketch: The case of a single terminal $ T' = \{ t\} $ is quite simple, since $ C_A' = 0 $ and $ W_{A'} = w(t) > \mu $. For $ |T'|\geq 2 $, we choose a terminal $ t\in T' $ randomly with probability $ p_t := \frac{w(t)}{W_{A'}} $ as the ``port'' vertex (only for the analysis). Then we obtain:

\begin{itemize}
\item The  expected cost of $(r,t)$ is  $\mathbb{E}(c(r,t)) = \sum _{t\in T'} p_t c(r,t) = \frac{1}{W_{A'}} D_{T'} $.

\item The (deterministic) connection cost within $ A' $ is  $ C_{A'} $.

\item The expected  effective  delay cost of $(r,t)$ is $$ \mathbb{E}(W_{A'} \cdot c(r,t)) = W_{A'} \cdot\sum _{t\in T'} p_t c(r,t) = D_{T'}.$$

\item The expected delay weight served by an edge $ (x,y)\in E(A') $ is
$$                        \frac{W_{A_y}}{W_{A'}}\cdot (W_{A'} - W_{A_y}) + \frac{W_{A'} - W_{A_y}}{W_{A'}}\cdot W_{A_y} = \frac{2 W_{A_y} (W_{A'} - W_{A_y})}{W_{A'}} \leq \frac{2\omega _{A'}}{W_{A'}}\leq \frac{W_{A'}}{2},
  $$

  where $ A_y $ is the sub-arborescence of $ A'-(x,y) $ containing $ y $. The formula reflects the expected component of  $(A'-(x,y))$
  in which  the port vertex is located.
                Summation over all edges in $ A' $ yields the following expected delay cost contribution of $E( A')$: $$ \frac{2\omega _{A'}}{W_{A'}} C_{A'}\leq \frac{W_{A'}}{2} C_{A'}. $$
\end{itemize}

The addition of these four terms gives the expected total cost of connecting $ A' $ to the root $r$, and
provides the bounds in (\ref{eq99}) and (\ref{eqn:kh_bound_multi-terminals}). The deterministic best choice of the ``port'' vertex in Algorithm~\ref{alg:khazraei21} cannot be more expensive.
\end{proof}

A similar cost bound can  be shown easily for the arborescence $A_r$ containing the root $ r $
after Step 2. Summing up the resulting cost bounds, we get the following result:

\begin{thm}[Khazraei and Held \cite{Khazraei21}]
        \label{theo98}
        Given an instance $(M,c,r,T,p,w)$ of the \textsc{Uniform Cost-Distance Steiner Tree Problem}, a Steiner tree $ A $, whose objective value (\ref{eq98}) is at most
        \begin{align}
                \label{eq97}
                (1+\mu ) C + \left( 1 + \frac{1}{\mu}\right) D,
        \end{align}

        where $ C $ is the cost of a $ \beta $-approximate minimum Steiner tree and $ D := D_T $, can be computed in polynomial time.
\end{thm}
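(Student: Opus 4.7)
The plan is to decompose the objective value of the computed tree into the stand-alone cost of the root component $A_r$ serving $T_{A_r}$ from $r$ and, for each $A'\in\cA$, the stand-alone cost of the $r$-arborescence $A'+(r,t_{A'})$ serving $T'$ from $r$, and then apply Lemma~\ref{lem99} to each $A'$ while bounding the $A_r$-piece directly. This decomposition is legitimate because for every $t\in T$ the $r$-$t$-path in the output tree either lies entirely in $A_r$ (when $t\in T_{A_r}$) or takes the form $(r,t_{A'})$ followed by the $t_{A'}$-$t$-path inside the unique $A'\in\cA$ containing $t$, so both connection and delay contributions split cleanly across pieces.

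For each $A'\in\cA$, I would invoke Lemma~\ref{lem99} as a black box to bound the corresponding piece by $(1+\mu)C_{A'}+(1+\tfrac{1}{\mu})D_{T'}$. For $A_r$, I would rewrite its delay contribution as $\sum_{(x,y)\in E(A_r)} c(x,y)\,W_{(A_r)_y}$ and use the invariant $W_{(A_r)_y}\le\mu$ for every edge $(x,y)\in E(A_r)$. This is exactly what the bottom-up cutting rule of Step~2 maintains: any edge whose current subtree weight exceeds $\mu$ is removed, and the final leaf-pruning step can only decrease these subtree weights. This yields an upper bound of $(1+\mu)C_{A_r}$ for the root component, fitting the same template with the $D$-term absent (no port edge is needed, since $A_r$ already contains $r$).

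Summing the pieces gives
\begin{align*}
(1+\mu)\Bigl(C_{A_r}+\sum_{A'\in\cA}C_{A'}\Bigr)+\Bigl(1+\tfrac{1}{\mu}\Bigr)\sum_{A'\in\cA}D_{T'}.
\end{align*}
Since $E(A_r)$ and the edge sets $E(A')$ for $A'\in\cA$ are pairwise disjoint subsets of $E(A_0)$, the first parenthesized sum is at most $c(E(A_0))=C$. Since the $T'$ are pairwise disjoint subsets of $T$, the second sum is at most $D_T=D$. This gives the claimed bound $(1+\mu)C+(1+\tfrac{1}{\mu})D$, and polynomial runtime is immediate from the three straightforward steps of Algorithm~\ref{alg:khazraei21}.

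The only mildly subtle point is the invariant $W_{(A_r)_y}\le\mu$ on every surviving edge of $A_r$; once that is justified from the bottom-up order of Step~2, everything else is routine bookkeeping layered on top of Lemma~\ref{lem99}.
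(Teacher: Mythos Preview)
Your proposal is correct and matches the paper's approach exactly: the paper simply remarks that ``a similar cost bound can be shown easily for the arborescence $A_r$'' and that summing the bounds from Lemma~\ref{lem99} over all $A'\in\cA$ together with the $A_r$ bound yields~(\ref{eq97}). Your write-up fills in precisely these details, including the key invariant $W_{(A_r)_y}\le\mu$ from the bottom-up cutting rule, which is the ``similar cost bound'' the paper alludes to.
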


As $(\frac{1}{\beta}C + D)$ is a lower bound on the objective cost of any Steiner tree for $T\cup\{r\}$,  the approximation ratio $(1+\beta)$  in Theorem~\ref{theo99} follows by
choosing $\mu =
\frac{1}{\beta}$.

\section{Instance-specific choice of $\mu$ and worst-case examples}
\label{sec:worst_case_example}
Algorithm \ref{alg:khazraei21} achieves the approximation factor  in  (\ref{eq97})
or  $ 1+\beta $ for a choice of $\mu$ that depends only on $\beta$.
The terms $C$ and $D$ can easily be computed in Step 1 of Algorithm~\ref{alg:khazraei21}
using the initial arborescence $A_0$.

In a practical implementation, we would rather choose $\mu$ such that the bound in
(\ref{eq97}) is minimized directly.
If $D\not=0$ and $C\not=0$, the choice   $\mu = \sqrt{\frac{D}{C}} $ minimizes the upper bound
(\ref{eq97}) which becomes $C+D+2\sqrt{CD} \le (1+\beta)(C_{SMT} + D)$ , where $C_{SMT}$ is the
minimum length of a Steiner tree for $T \cup \{r\}$.

If $ C = 0 $, each $ r $-$ t $-path in $A_0$ after Step 1 ($ t \in T $)  has length $ 0 $.
Thus, $\mu=\infty$ (or suppressing Steps 2 and 3) yields an optimum solution.

If $ D = 0 $, then $ w(t) = 0 $ or $ c(r,t) = 0 $ for each $ t\in T $.
For each terminal $ t\in T $ with $ c (r,t) = 0 $, we add the edge $(r,t)$ (creating cycles).
A shortest-path $r$-arborescence in the resulting graph
is a  $ \beta $-approximate minimum Steiner tree with total  delay cost $ 0 $.
In particular, it is a  $ \beta $-approximate minimum cost-distance Steiner tree.

The next Theorem shows that even after these modifications the analysis cannot be improved.

\begin{thm}
        \label{lem95}
        For $ \beta = 1 $ and  $ \epsilon > 0 $, there is an instance of \emph{\textsc{Uniform Cost-Distance Steiner Tree Problem}}, for which Algorithm~\ref{alg:khazraei21}
        with the above modifications computes a solution with cost at least $ (2 - \epsilon ) \OPT $, where $ \OPT $ is the optimal objective function value (\ref{eq98}).
\end{thm}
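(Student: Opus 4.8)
The plan is to construct an explicit family of instances on which the bound in~\eqref{eq97} (after the modifications allowing $\mu$ to be chosen optimally) is essentially achieved with ratio $2$. Since $\beta = 1$ we may take $A_0$ to be an actual minimum-length Steiner tree, so there is no slack in Step~1; the loss of a factor close to $2$ must come entirely from Steps~2 and~3. To force this, I would design the instance so that the optimum solution has both connection cost and delay cost that are small and comparable, while the threshold-based cutting in Step~2 is forced to produce a forest whose components have the ``unfavourable structure'' alluded to in the introduction — namely components with many terminals of small weight spread along a long path, so that re-connecting each component by a single port edge $(r,t)$ roughly doubles either its connection contribution or its delay contribution no matter which terminal is chosen.

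Concretely, I would place the root $r$ and arrange $n$ terminals (for large $n$) roughly on a line or a long path of total length $L$, each with weight $w = W/n$ for a suitable total weight $W$, positioned so that $c(r,t)$ is about the same for all $t$ (e.g.\ all terminals on a circle of radius $\rho$ around $r$, connected cheaply to each other by a short Steiner path of total length $L$ with $L \ll \rho$ impossible — so instead one arranges $c(r,t_i)$ growing mildly, or uses a caterpillar). The key design goals: (i) $C_{SMT} = C$ is dominated by the path connecting the terminals, of length $\approx L$; (ii) $D = D_T = \sum_t w(t) c(r,t) \approx W\rho$; (iii) $C$ and $D$ are balanced so that the optimum value is close to $C + D$ (this already holds by the lower bound~\eqref{eq:lower_bound}), and moreover there is a near-optimal Steiner tree attaining value close to $C+D$; (iv) crucially, every way of cutting $A_0$ at a single edge and re-connecting via one port vertex pays essentially $2C$ or $2D$ somewhere, so that the total cost of the algorithm's output is at least $\approx C + D + \min(C,D) \cdot (1-o(1))$, which with $C \approx D$ gives $\approx 2\,\OPT$. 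I would verify that the instance-specific optimal $\mu = \sqrt{D/C} \approx 1$ still yields~\eqref{eq97} equal to $(1+\mu)C + (1+1/\mu)D \approx 4C \approx 2(C+D)$, and check that the degenerate cases $C = 0$ and $D = 0$ do not apply and hence none of the three modifications help.

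The main obstacle, and the place where the construction has to be done carefully, is point~(iv): ruling out that the algorithm gets lucky. The port-selection step~3 minimizes over all choices of $t \in T'$, and Lemma~\ref{lem99} shows the algorithm does well whenever the balance $\omega_{A'}$ is small or $W_{A'}$ is small; so the instance must force every cut-off component to have $W_{A'}$ close to the threshold $\mu$ \emph{and} to be well-balanced (so the $\frac{2\omega_{A'}}{W_{A'}}$ term in~\eqref{eqn:kh_bound_multi-terminals} is close to $\frac{W_{A'}}{2} \approx \mu/2$), while simultaneously the geometry must make $D_{T'}$ and $C_{A'}$ each a constant fraction of the whole, for \emph{every} admissible cut. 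I expect to handle this by making the terminals essentially identical in weight and in distance-to-root, and spacing them so that the threshold $\mu$ forces components of size $\Theta(\mu/w)$ terminals each; then for each such component both the ``$\mu C_{A'}$'' delay-blowup and the ``$D_{T'}$'' term are unavoidable, and summing over the $\Theta(W/\mu)$ components reconstitutes a $\mu C + D + C + \frac{1}{\mu}D$-type total. A secondary technical point is to also account for the root component $A_r$ and show it contributes at most a lower-order saving. Finally I would take $n \to \infty$ and the geometric parameters to their limiting ratio and let $\epsilon$ absorb all the $o(1)$ terms, obtaining a solution of cost at least $(2-\epsilon)\OPT$.
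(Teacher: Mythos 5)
There is a genuine gap: your proposal is a construction plan whose crucial step (your point~(iv)) is left open, and the quantitative target you set for that step does not reach the claimed factor. You aim to show the algorithm is forced to pay at least $C + D + \min(C,D)\cdot(1-o(1))$; with $C\approx D$ and $\OPT$ close to $C+D$ (which you also require, via the lower bound (\ref{eq:lower_bound})) this is only about $\tfrac{3}{2}\,\OPT$, not $(2-\epsilon)\OPT$. To get a ratio of $2$ the algorithm must essentially double \emph{both} terms simultaneously, i.e.\ pay about $(1+\mu)C + (1+\tfrac{1}{\mu})D \approx 2C+2D$ with $\mu\approx 1$, while some feasible tree still attains $\approx C+D$. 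Your stated design principles work against this: if all terminals have essentially the same weight and the same distance to $r$, and every cut-off component is well balanced with $W_{A'}$ close to $\mu$, then by (\ref{eqn:kh_bound_multi-terminals}) its connection coefficient is at most $1+\tfrac{W_{A'}}{2}\approx 1+\tfrac{\mu}{2}$; pushing it towards $1+\mu$ forces $W_{A'}\approx 2\mu$, which in turn caps the delay coefficient at $1+\tfrac{1}{W_{A'}}\approx 1+\tfrac{1}{2\mu}$. So homogeneous, balanced components can never be tight in both terms of Lemma~\ref{lem99} at once, and with identical terminals strung along one long cheap path it is also unclear that \emph{any} tree gets close to $C+D$, so $\OPT$ may rise together with the algorithm's output and the ratio collapses. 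You also never address tie-breaking in Step~1: with $\beta=1$ the initial tree is a true minimum Steiner tree, so the bad behaviour must be triggered by an adversarial choice among several minimum Steiner trees, which has to be built into the instance.

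The paper's proof resolves exactly these tensions with a heterogeneous gadget construction: $k$ copies of $K_4$ glued at $r$ plus a chain $a_1,\dots,a_k$, with two minimum Steiner trees of length $k$ — a star-like one on which $\OPT = C_{SMT}+D = 2k+4$ is attained exactly, and a path-like one that Step~1 may return. The weights and the instance-specific $\mu=\sqrt{D/C}=\sqrt{1+4/k}$ are tuned so that each zero-connection-cost pair $\{b_{i,1},b_{i,2}\}$ (total weight just above $\mu$) is cut off and must buy a fresh unit-length edge to $r$, doubling the connection cost from $k$ to $2k$, while a weight-$1$ terminal ($a_k$), at metric distance $1$ from $r$, remains at the far end of the length-$k$ chain, doubling the delay cost; together this yields $4k+3$ against $2k+4$. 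Note in particular that the component containing a heavy terminal at the end of a long path is not a lower-order nuisance, as your sketch assumes for $A_r$, but the source of half of the loss. Without a concrete instance realizing these competing requirements, your argument does not establish the theorem.
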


\begin{proof}
  We define an instance class on a metric space that is induced by a graph $G$ with edge weights $c:E(G)\to \mathbb{R}_{\ge 0}$ as follows.
  Let $ k\in \mathbb{N}, k\ge 2 $. For each $ i\in \{ 1,\ldots ,k\} $, let $ G_i \cong K_4$ be the complete  graph  with four nodes $ r $, $ a_i $, $ b_{i,1} $ and $ b_{i,2}$, as
  shown in Figure~\ref{fig:auxiliary_grapg_G_i}.

        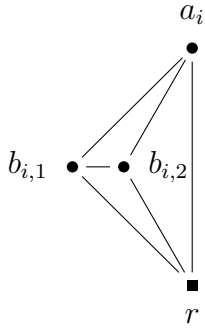
\begin{figure}
                \centering
                \begin{tikzpicture}[scale=0.9]
                        \node[fsquare,label=270:$ r $] (r) at (0,0) {};

                        \node[fcircle,label=90:$ a_i $] (a) at (0,3.5) {};
                        \node[fcircle,label=180:$ b_{i,1} $] (b1) at (-1.75,1.75) {};
                        \node[fcircle,label=0:$ b_{i,2} $] (b2) at (-1,1.75) {};

                        \draw (r) to (a) to (b1) to (r) to (b2) to (a);
                        \draw (b1) to (b2);
                \end{tikzpicture}
                \caption{Auxiliary graph $ G_i \cong K_4$}
                \label{fig:auxiliary_grapg_G_i}
        \end{figure}

        Now the graph $ G $ arises from the auxiliary graphs $ G_1,\ldots ,G_k $ by identifying the nodes $ r $.
        In addition, we insert edges $ \{ a_i,a_{i+1}\} $ for all $ i\in \{ 1,\ldots ,k-1\} $. $G$ is shown in Figure~\ref{fig:graph-G}.

        The node $ r $ is the root and $T := V(G)\setminus \{r\}$. The sinks $v\in T$ have the following delay weights:
        \begin{align*}
                w(v)
                := \begin{cases}
                  1 & \text{if } v \in \{ a_1,a_k\}, \\
                  0 & \text{if } v \in \{ a_2,\dots,a_{k-1}\}, \\
                  \frac{1}{2} + \frac{1}{k} & \text{if } v \in\{ b_{i,j} :  i \in \{1,\ldots,k\}, j\in \{1,2\}\}.
                \end{cases}
        \end{align*}

        Finally, we define the edge weights $ c\colon E(G)\to \bR _{\geq 0} $ by
        \begin{align*}
                c(\{ x,y\} )
                = \begin{cases}
                        1 & \text{if } r\in \{ x,y\} \text{ or } \{ x,y\} = \{ a_i, a_{i+1}\} \\
                        0 & \text{if } \{ x,y\} \in \{ \{ a_i, b_{i,1}\} , \{ a_i, b_{i,2}\} , \{ b_{i,1}, b_{i,2}\} \} .
                \end{cases}
        \end{align*}

        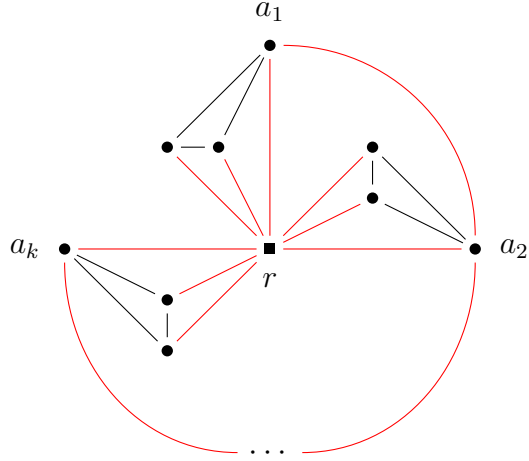
\begin{figure}[H]
                \centering
                \begin{tikzpicture}[scale=0.9]
                        \node[fsquare,label=270:$ r $] (r) at (0,0) {};

                        \node[fcircle,label=90:$ a_1 $] (a1) at (0,3) {};
                        \node[fcircle] (b11) at (-1.5,1.5) {};
                        \node[fcircle] (b12) at (-0.75,1.5) {};

                        \draw[color=red] (r) --  (a1);
                        \draw[color=red] (r) --  (b11);
                        \draw[color=red] (r) --  (b12);
                        \draw (a1) to (b11) ;
                        \draw  (b12) to (a1);
                        \draw (b11) to (b12);

                        \node[fcircle,label=0:$ a_2 $] (a2) at (3,0) {};
                        \node[fcircle] (b21) at (1.5,1.5) {};
                        \node[fcircle] (b22) at (1.5,0.75) {};

                        \draw[color=red] (r) --  (a2);
                        \draw[color=red] (r) --  (b21);
                        \draw[color=red] (r) --  (b22);

                        \draw (a2) to (b21) ;
                        \draw  (b22) to (a2);
                        \draw (b21) to (b22);

                        \node[fcircle,label=180:$ a_k $] (ak) at (-3,0) {};
                        \node[fcircle] (bk1) at (-1.5,-1.5) {};
                        \node[fcircle] (bk2) at (-1.5,-0.75) {};

                        \draw[color=red] (r) --  (ak);
                        \draw[color=red] (r) --  (bk1);
                        \draw[color=red] (r) --  (bk2);

                        \draw (ak) to (bk1) ;
                        \draw  (bk2) to (ak);
                        \draw (bk1) to (bk2);

                        \draw[color=red] (a1) to[out=0,in=90] (a2);

                        \node (tmp) at (0,-3) {$ \cdots $};

                        \draw[color=red] (a2) to[out=270,in=0] (tmp) to[out=180,in=270] (ak);
                \end{tikzpicture}
                \caption{The graph $ G $ that defines the metric. Red edges have cost 1, black edges have cost 0.}
                \label{fig:graph-G}
        \end{figure}

        We first determine the cost of a minimal Steiner tree. To do this, for each $ i\in \{ 1,\ldots ,k\} $ we contract the three nodes $ a_i $, $ b_{i,1} $ and $ b_{i,2} $. What remains is a graph with $ k + 1 $ nodes and minimum edge weight $ 1 $. Since each Steiner tree must contain $ k $ of the remaining edges, we have $ C_{SMT} \geq k $. Thus, the Steiner tree with edges $ \{ r,a_i\} $, $ \{ a_i, b_{i,2}\} $ and $ \{ b_{i,1} , b_{i,2}\} $ for all $ i\in \{ 1,\ldots ,k\} $ (see Figure \ref{fig91}) is optimum.

        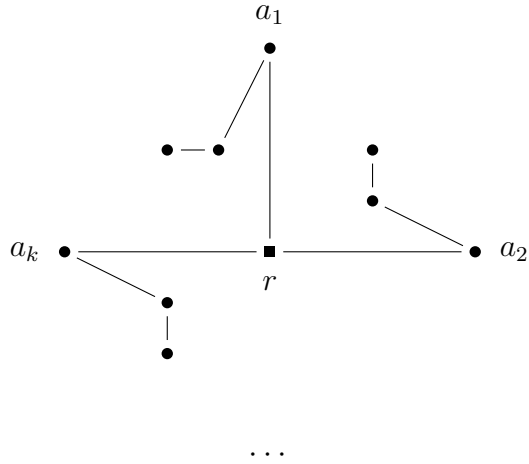
\begin{figure}
                \centering
                \begin{tikzpicture}[scale=0.9]
                        \node[fsquare,label=270:$ r $] (r) at (0,0) {};

                        \node[fcircle,label=90:$ a_1 $] (a1) at (0,3) {};
                        \node[fcircle] (b11) at (-1.5,1.5) {};
                        \node[fcircle] (b12) at (-0.75,1.5) {};

                        \draw (r) to (a1) to (b12) to (b11);

                        \node[fcircle,label=0:$ a_2 $] (a2) at (3,0) {};
                        \node[fcircle] (b21) at (1.5,1.5) {};
                        \node[fcircle] (b22) at (1.5,0.75) {};

                        \draw (r) to (a2) to (b22) to (b21);

                        \node[fcircle,label=180:$ a_k $] (ak) at (-3,0) {};
                        \node[fcircle] (bk1) at (-1.5,-1.5) {};
                        \node[fcircle] (bk2) at (-1.5,-0.75) {};

                        \draw (r) to (ak) to (bk2) to (bk1);

                        \node (tmp) at (0,-3) {$ \cdots $};
                \end{tikzpicture}
                \caption{An optimum (cost-distance) Steiner tree}
                \label{fig91}
        \end{figure}

        Moreover, for each sink terminal $ t \in T$, this tree contains a shortest $ r $-$ t $-path in $ (G,c) $, which is why Figure \ref{fig91} shows an optimum cost-distance Steiner tree. Its total cost is
        \begin{align}
                \label{eq89}
                k + \left( 2 + 2k\cdot \left( \frac{1}{2} + \frac{1}{k}\right) \right)
                = 2k + 4.
        \end{align}

In contrast Algorithm~\ref{alg:khazraei21} might compute $A_0$ as
the  minimum-length Steiner tree  shown in Figure~\ref{fig:min-length-tree}.

 \begin{figure}
   \centering
   \begin{tikzpicture}[scale=0.9]
     \node[fsquare,label=270:$ r $] (r) at (0,0) {};

     \node[fcircle,label=90:$ a_1 $] (a1) at (0,3) {};
                                        \node[fcircle] (b11) at (-1.5,1.5) {};
                                        \node[fcircle] (b12) at (-0.75,1.5) {};

                                        \draw (r) to (a1) to (b12) to (b11);

                                        \node[fcircle,label=0:$ a_2 $] (a2) at (3,0) {};
                                        \node[fcircle] (b21) at (1.5,1.5) {};
                                        \node[fcircle] (b22) at (1.5,0.75) {};

                                        \draw (a2) to (b22) to (b21);

                                        \node[fcircle,label=180:$ a_k $] (ak) at (-3,0) {};
                                        \node[fcircle] (bk1) at (-1.5,-1.5) {};
                                        \node[fcircle] (bk2) at (-1.5,-0.75) {};

                                        \draw (ak) to (bk2) to (bk1);

                                        \draw (a1) to[out=0,in=90] (a2);

                                        \node (tmp) at (0,-3) {$ \cdots $};

                                        \draw (a2) to[out=270,in=0] (tmp) to[out=180,in=270] (ak);
   \end{tikzpicture}
   \caption{$ \beta $-approximate minimum Steiner tree}
   \label{fig:min-length-tree}
 \end{figure}
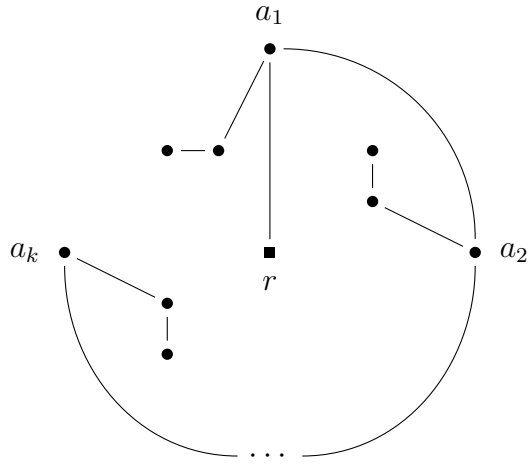

 This contains the edge $ \{ r,a_1\} $, for all $ i\in \{ 1,\ldots ,k-1\} $ the edges $ \{ a_i, a_{i+1}\} $ and for all $ i\in \{ 1,\ldots ,k\} $ the edges $ \{ a_i, b_{i,2}\} $, $ \{ b_{i,1}, b_{i,2}\} $.
Its length is  $ C = C_{SMT} = k $ and the minimum possible delay cost  is
\begin{align*}
  D
  = \sum _{t\in T} w(t) c(r,t)
  = 2 + 2k\cdot \left( \frac{1}{2} + \frac{1}{k}\right)
  = k + 4.
\end{align*}

Thus, the threshold $ \mu $ is chosen as
\begin{align*}
  \mu
  = \sqrt{\frac{D}{C}}
  = \sqrt{1 + \frac{4}{k}} .
\end{align*}

So  we obtain $ \mu > w(a_i) $, $ \mu > w(b_{i,1}) = w(b_{i,2}) $ and
\begin{align*}
  \mu
  < \sqrt{\left( 1 + \frac{2}{k}\right) ^2}
  = w(b_{i,1}) + w(b_{i,2}),
\end{align*}
for all $ i\in \{ 1,\ldots ,k\} $.

After transforming the Steiner tree into an arborescence with leaf set $T\cup\{r\}$ and the degree constraints, we obtain the $r$-arborescence
shown in Figure~\ref{fig:Steiner-arb-in-worst-case}. However, inserting these nodes does not change the remaining proof, so we identify the inserted Steiner vertices with their terminals again.

\begin{figure}
                            \centering
                                        \begin{tikzpicture}[scale=0.9]
                                                \node[fsquare,label=270:$ r $] (r) at (0,0) {};

                                                \node[hcircle] (a1) at (0,3) {};
                                                \node[hcircle] (a11) at (70:3) {};
                                                \node[fcircle,label=70:$ a_1 $] (a12) at (70:3.9) {};
                                                \node[fcircle] (b11) at (-1.5,1.5) {};
                                                \node[hcircle] (b12) at (-0.75,1.5) {};
                                                \node[fcircle] (b13) at (-0.75,0.75) {};

                                                \draw[->] (r) to (a1);
                                                \draw[->] (a1) to (b12);
                                                \draw[->] (b12) to (b11);
                                                \draw[->,blue] (b12) to (b13);

                                                \node[hcircle] (a2) at (3,0) {};
                                                \node[hcircle] (a21) at (340:3) {};
                                                \node[fcircle,label=340:$ a_2 $] (a22) at (340:3.9) {};
                                                \node[fcircle] (b21) at (1.5,1.5) {};
                                                \node[hcircle] (b22) at (1.5,0.75) {};
                                                \node[fcircle] (b23) at (0.75,0.75) {};

                                                \draw[->] (a2) to (b22);
                                                \draw[->] (b22) to (b21);
                                                \draw[->,blue] (b22) to (b23);

                                                \node[hcircle] (ak) at (-3,0) {};
                                                \node[fcircle,label=180:$ a_k $] (ak2) at (180:3.9) {};
                                                \node[fcircle] (bk1) at (-1.5,-1.5) {};
                                                \node[hcircle] (bk2) at (-1.5,-0.75) {};
                                                \node[fcircle] (bk3) at (-0.75,-0.75) {};

                                                \draw[->] (ak) to (bk2);
                                                \draw[->] (bk2) to (bk1);
                                                \draw[->,blue] (bk2) to (bk3);

                                                \draw[->,blue] (a1) to[out=0,in=160] (a11);
                                                \draw[->,blue] (a11) to (a12);
                                                \draw[->] (a11) to[out=340,in=90] (a2);

                                                \node (tmp) at (0,-3) {$ \cdots $};

                                                \draw[->,blue] (a2) to[out=270,in=70] (a21);
                                                \draw[->,blue] (a21) to (a22);
                                                \draw[->] (a21) to[out=250,in=0] (tmp);

                                                \draw[->,blue] (ak) to (ak2);
                                                \draw[->] (tmp) to[out=180,in=270] (ak);
                                        \end{tikzpicture}

                                        \caption{Steiner arborescence -  endpoints of blue edges are mapped to the same graph vertices.}
                                        \label{fig:Steiner-arb-in-worst-case}
                \end{figure}
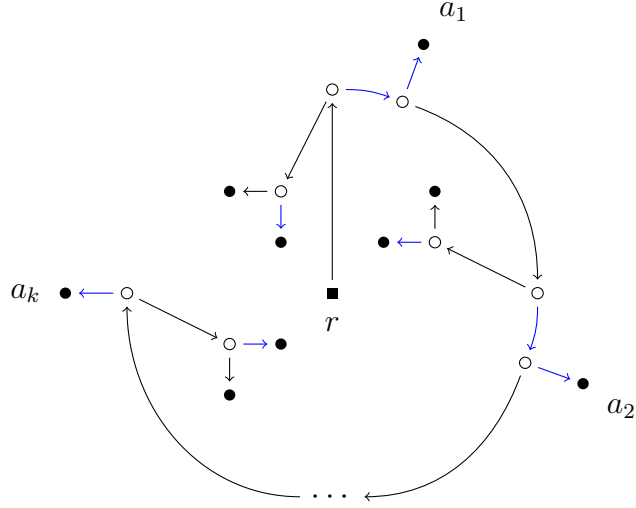

Regardless of the chosen topological order in Step 2 of Algorithm~\ref{alg:khazraei21}, the following properties hold:
\begin{enumerate}
\item No sub-arborescence with root $ b_{i,1} $ is ever cut off because the delay weight in the sub-arborescence is $ w(b_{i,1}) < \mu $.

\item A sub-arborescence with root $ b_{i,2} $ is always cut off, since the delay weight in the sub-arborescence is $ w(b_{i,1}) + w(b_{i,2}) > \mu $.

\item A sub-arborescence with root $ a_i $ is cut off exactly when $ i = 1 $, since according to the second property for $ i > 1 $ the delay weight in the sub-arborescence is $ w(a_k) < \mu $.
\end{enumerate}

Specifically, for each $ i\in \{ 1,\ldots ,k\} $ we remove the edge $ (a_i,b_{i,2}) $ and finally the edge $ (r,a_1) $. In this way, we thus obtain $ k + 1 $ cut off sub-arborescences.

  \begin{figure}
    \centering
    \begin{tikzpicture}[scale=0.9]
      \node[fsquare,label=270:$ r $] (r) at (0,0) {};

      \node[fcircle,label=90:$ a_1 $] (a1) at (0,3) {};
      \node[fcircle] (b11) at (-1.5,1.5) {};
      \node[fcircle] (b12) at (-0.75,1.5) {};

      \draw[->,red] (r) to (a1);
      \draw[->,red] (a1) to (b12);
      \draw[->] (b12) to (b11);

      \draw[dashed,blue] (-1.15,1.5) ellipse (0.8 and 0.4);

      \node[fcircle,label=0:$ a_2 $] (a2) at (3,0) {};
      \node[fcircle] (b21) at (1.5,1.5) {};
      \node[fcircle] (b22) at (1.5,0.75) {};

      \draw[->,red] (a2) to (b22);
      \draw[->] (b22) to (b21);

      \draw[dashed,blue] (1.5,1.15) ellipse (0.4 and 0.8);

      \node[fcircle,label=180:$ a_k $] (ak) at (-3,0) {};
      \node[fcircle] (bk1) at (-1.5,-1.5) {};
      \node[fcircle] (bk2) at (-1.5,-0.75) {};

      \draw[->,red] (ak) to (bk2);
      \draw[->] (bk2) to (bk1);

      \draw[dashed,blue] (-1.5,-1.15) ellipse (0.4 and 0.8);

      \draw[->] (a1) to[out=0,in=90] (a2);

      \node (tmp) at (0,-3) {$ \cdots $};

      \draw[->] (a2) to[out=270,in=0] (tmp) to[out=180,in=270] (ak);

      \draw[dashed,blue] (170:2.75) arc [start angle=170, delta angle=290, radius=2.75] -- (100:3.875) arc [start angle=100, delta angle=-290, radius=3.875] -- (170:2.75);
    \end{tikzpicture}

    \caption{Arborescence is divided into sub-arborescences outlined in blue. Red edges are removed in Step 2 of Algorithm~\ref{alg:khazraei21}.}
  \end{figure}
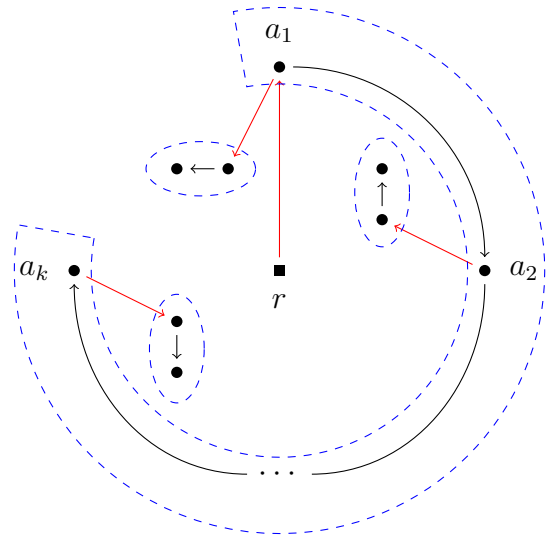

We immediately convince ourselves that for all arborescences it does not matter which terminal is used as a port for that component.
For each $ i\in \{ 1,\ldots ,k\} $, we choose the terminal $ b_{i,2} $ with the edge $ \{ r,b_{i,2}\} $. Furthermore we re-connect the terminal $ a_1 $ with the edge $ \{ r, a_1\} $.

        As a result, we get the following cost-distance Steiner tree:

        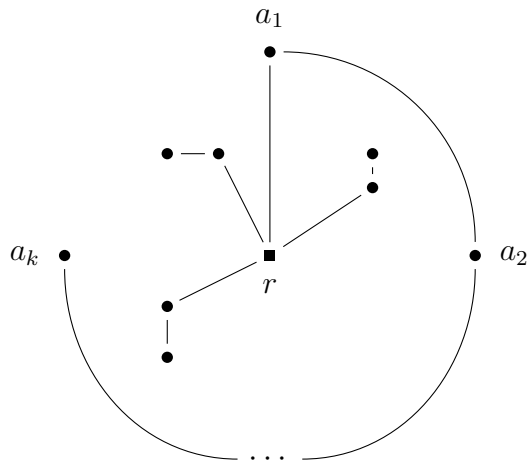
\begin{figure}[H]
                \centering
                \begin{tikzpicture}[scale=0.9]
                        \node[fsquare,label=270:$ r $] (r) at (0,0) {};

                        \node[fcircle,label=90:$ a_1 $] (a1) at (0,3) {};
                        \node[fcircle] (b11) at (-1.5,1.5) {};
                        \node[fcircle] (b12) at (-0.75,1.5) {};

                        \draw (r) to (b12) to (b11);

                        \node[fcircle,label=0:$ a_2 $] (a2) at (3,0) {};
                        \node[fcircle] (b21) at (1.5,1.5) {};
                        \node[fcircle] (b22) at (1.5,1) {};

                        \draw (r) to (b22) to (b21);

                        \node[fcircle,label=180:$ a_k $] (ak) at (-3,0) {};
                        \node[fcircle] (bk1) at (-1.5,-1.5) {};
                        \node[fcircle] (bk2) at (-1.5,-0.75) {};

                        \draw (r) to (bk2) to (bk1);

                        \draw (r) to (a1) to[out=0,in=90] (a2);

                        \node (tmp) at (0,-3) {$ \cdots $};

                        \draw (a2) to[out=270,in=0] (tmp) to[out=180,in=270] (ak);
                \end{tikzpicture}
                \caption{Computed cost-distance Steiner tree}
        \end{figure}

        The total cost of this solution is
        \begin{align}
                \label{eq88}
                (k + k) + \left( 1 + k + 2k\cdot \left( \frac{1}{2} + \frac{1}{k}\right) \right)
                = 4k + 3.
        \end{align}

Combining the results from (\ref{eq89}) and (\ref{eq88}) shows the asymptotic gap of $2$ for increasing $ k $.
\end{proof}

\begin{remark}
        We can use the same graph to formulate Theorem \ref{lem95} for any parameter $ \beta \leq 2 $. To do this, we generalize the delay weights
        \begin{align*}
                w(v)
                := \begin{cases}
                  \frac{1}{\beta}  & \text{if } v \in \{a_1,a_k\}, \\
                  0  & \text{if } v \in \{a_2,\dots,a_{k-1}\}, \\
                  \frac{1}{2} \frac{k}{1 + (k-1)\beta} + \frac{1}{1 + (k-1)\beta} & \text{if } v = b_{i,1} \text{ or } v = b_{i,2}
                \end{cases}
        \end{align*}

        and edge weights
        \begin{align*}
                c(\{ x,y\} )
                = \begin{cases}
                        1 & \text{if } r\in \{ x,y\} \\
                        \beta & \text{if } \{ x,y\} = \{ a_i, a_{i+1}\} \\
                        0 & \text{if } \{ x,y\} \in \{ \{ a_i, b_{i,1}\} , \{ a_i, b_{i,2}\} , \{ b_{i,1}, b_{i,2}\} \} .
                \end{cases}
        \end{align*}

        One easily computes that the instance constructed in this way allows the same choices as in the previous proof. The total cost of the solution calculated in this way deviates from the optimal objective function value by at least a factor $ 1 + \beta - \epsilon $ for  sufficiently large natural numbers $ k $.
\end{remark}

One could still hope that the modified algorithm provides a better approximation quality if the underlying metric space has more structure,
e.g.\ in the Manhattan plane.
However, we give a counterexample with a second proof of Theorem~\ref{lem95} in Appendix~\ref{appendix:tight-example-in-Manhattan-plane}.

\section{Improving the approximation ratio}
\label{sec:improvement}

When analyzing the worst-case example from Section~\ref{sec:worst_case_example}, it is noticeable that the cut-off arborescences can be classified into two groups:
\begin{enumerate}
        \item those with low delay weight (close to $ \mu $), low connection cost and high minimum possible delay cost and

        \item those with high delay weight (close to $ 2\mu $), high connection and low minimum possible delay cost.
\end{enumerate}

In this section, we improve the approximation quality by improving the
re-connection of arborescences in the second group and by
re-connecting the root component under certain circumstances.

\subsection{Splitting arborescences  into up to two components}
\label{sec:cut-into-two}
Assume  that we are given an arborescence $A'\in \cA$ with a high delay weight $W_{A'}$, a high connection cost $C_{A'}$, but a low minimum possible delay cost $D_{A'}$, e.g. as
as shown in Figure~\ref{fig:candidate-for-further-split} similar to the example in Section ~\ref{sec:worst_case_example}.
Then, it is reasonable to split the arborescence and connect the resulting sub-arborescences separately to the root so that paths to sinks with large delay weights are kept short as on the right  side.
\begin{figure}[t]
        \centering
        \begin{subfigure}{0.4\textwidth}
                \centering
                \begin{tikzpicture}
                        \node[fsquare,red,label=180:$ r $] (r) at (0,0) {};

                        \node[fcircle,label=90:\footnotesize{\textcolor{blue}{$ 1 $}}] (5) at (90:1.5) {};
                        \node[fcircle,label=45:\footnotesize{\textcolor{blue}{$ 0 $}}] (4) at (45:1.5) {};
                        \node[fcircle,label=0:\footnotesize{\textcolor{blue}{$ 0 $}}] (3) at (0:1.5) {};
                        \node[fcircle,label=315:\footnotesize{\textcolor{blue}{$ 0 $}}] (2) at (315:1.5) {};
                        \node[fcircle,label=270:\footnotesize{\textcolor{blue}{$ 0 $}}] (1) at (270:1.5) {};
                        \node[fcircle,label=225:\footnotesize{\textcolor{blue}{$ 1 $}}] (0) at (225:1.5) {};

                        \draw[->] (5) to (4);
                        \draw[->] (4) to (3);
                        \draw[->] (3) to (2);
                        \draw[->] (2) to (1);
                        \draw[->] (1) to (0);

                        \draw[dashed,->] (r) to (5);
                \end{tikzpicture}
                \caption{Cost: $ 6 + (1 + 6) = 13 $}
        \end{subfigure}
        \begin{subfigure}{0.4\textwidth}
                \centering
                \begin{tikzpicture}
                        \node[fsquare,red,label=180:$ r $] (r) at (0,0) {};

                        \node[fcircle,label=90:\footnotesize{\textcolor{blue}{$ 1 $}}] (5) at (90:1.5) {};
                        \node[fcircle,label=45:\footnotesize{\textcolor{blue}{$ 0 $}}] (4) at (45:1.5) {};
                        \node[fcircle,label=0:\footnotesize{\textcolor{blue}{$ 0 $}}] (3) at (0:1.5) {};
                        \node[fcircle,label=315:\footnotesize{\textcolor{blue}{$ 0 $}}] (2) at (315:1.5) {};
                        \node[fcircle,label=270:\footnotesize{\textcolor{blue}{$ 0 $}}] (1) at (270:1.5) {};
                        \node[fcircle,label=225:\footnotesize{\textcolor{blue}{$ 1 $}}] (0) at (225:1.5) {};

                        \draw[->] (5) to (4);
                        \draw[->] (4) to (3);
                        \draw[<-] (2) to (1);
                        \draw[<-] (1) to (0);

                        \draw[dashed,->] (r) to (5);
                        \draw[dashed,->] (r) to (0);
                \end{tikzpicture}
                \caption{Cost: $ 6 + (1 + 1) = 8 $}
        \end{subfigure}
        \caption{Connection of an arborescence before and after a split. $(M,c)$ is induced by  a complete graph with unit weights and delay weights are indicated by the blue node labels.}
        \label{fig:candidate-for-further-split}
\end{figure}
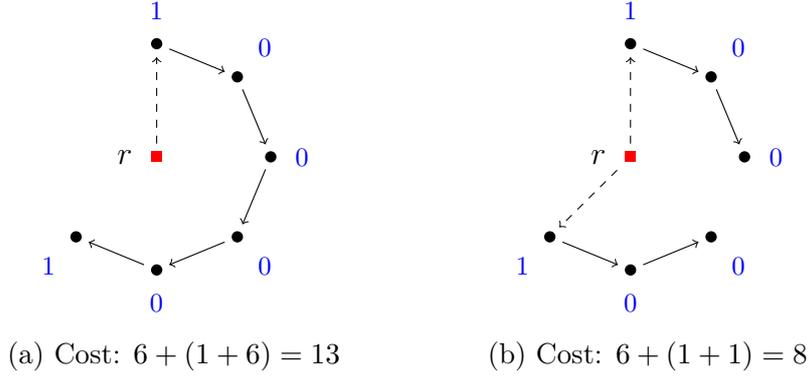

We first formalize this idea by  splitting an arborescence along a single  edge.
The following Lemma    reduces the coefficient of $\mu$  in the first  term of the upper bound from $1$ in Lemma~\ref{lem99} to $\frac{2}{3}$.
\begin{lem}
        \label{lem98}
        Let $ A'\in \cA $  be an arborescence with terminal set  $ T' $ that has been disconnected in Step 2 of Algorithm~\ref{alg:khazraei21}.
        Then the terminals from $ T' $ can be connected to the root $ r $ in $ \mathcal{O} (|E(A')| + |T_{A'}|) $ time with total cost at most
        $$
                \left( 1 + \frac{2}{3}\mu \right) C_{A'} + \left( 1 + \frac{1}{\mu}\right) D_{T'}.
                $$
\end{lem}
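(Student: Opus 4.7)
The plan is to let the algorithm compute two candidate re-connections and return the cheaper. The first candidate is the single-port connection analyzed in Lemma~\ref{lem99}, which has cost at most $(1+2\omega_{A'}/W_{A'})\,C_{A'} + (1 + 1/W_{A'})\,D_{T'}$. The second candidate removes the edge $e^* \in E(A')$ that realizes $\omega_{A'}$ and re-connects each of the two resulting sub-arborescences to $r$ individually via Lemma~\ref{lem99}. Denoting the piece weights at $e^*$ by $\alpha^* \le \beta^*$ (so $\alpha^*+\beta^*=W_{A'}$ and $\alpha^*\beta^*=\omega_{A'}$), the summed bound is at most $(1 + \beta^*/2)\,C_{A'} + (1 + 1/\alpha^*)\,D_{T'}$.

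A key structural observation that I would establish first is that at the balance-maximizing edge $e^*$ we always have $\alpha^*,\beta^* \in [W_{A'}-\mu,\,\mu]$. Indeed, because Step~2 of Algorithm~\ref{alg:khazraei21} leaves the two children of the (necessarily Steiner) root of $A'$ with sub-arborescence weights in $[W_{A'}-\mu,\,\mu]$ summing to $W_{A'}$, the balance at either root-incident edge is at least $(W_{A'}-\mu)\mu$; hence $\omega_{A'} \ge (W_{A'}-\mu)\mu$, which together with $\alpha^* \le \mu$ (the cut rule for non-root vertices) forces $\alpha^* \ge W_{A'}-\mu$ and $\beta^* \le \mu$. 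In particular, the split's $C_{A'}$-coefficient $1 + \beta^*/2$ is already at most $1 + \tfrac{2}{3}\mu$.

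The analysis then splits into two cases. In Case~1, where $\omega_{A'} \le \mu W_{A'}/3$ (which covers in particular $W_{A'} \le 4\mu/3$), the first candidate alone meets the target bound because $2\omega_{A'}/W_{A'} \le \tfrac{2}{3}\mu$ and $1/W_{A'} < 1/\mu$. In Case~2, where $\omega_{A'} > \mu W_{A'}/3$, neither candidate alone may meet the target for all $(C_{A'},D_{T'})$, but the pointwise minimum does, which is what I would need to show. This reduces to verifying that the $C_{A'}/D_{T'}$-threshold below which the first candidate meets the target is at least the threshold above which the second does. After substituting $W_{A'}=x\mu$, $\alpha^*=a\mu$ with $a \in [x-1,\,x/2]$, $x \in (4/3,\,2]$, and using $\omega_{A'}=a(x-a)\mu^2$, the overlap condition reduces to the polynomial inequality
\begin{equation*}
a(x-1)(3a-3x+4) \ \ge\ 4(1-a)(3ax-3a^2-x).
\end{equation*}
At the two boundary values of $a$ this collapses to $(3x-5)^2\ge 0$ (for $a=x-1$) and $\tfrac{x}{4}(3x^2-9x+8)\ge 0$ (for $a=x/2$, with $3x^2-9x+8$ having negative discriminant), and extending to the interior of the rectangle is a routine check on the cubic $a\mapsto F(a,x)$.

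The main obstacle is the algebraic verification of this overlap inequality; the boundary pieces collapse cleanly, but the interior step needs a careful cubic-in-$a$ argument. A minor subtlety is handling the case in which one of the two split pieces contains a single terminal, where the first part of Lemma~\ref{lem99} applies with $C_{A_i}=0$ and the bound continues to hold. For the running time, a single post-order traversal of $A'$ computes every sub-arborescence weight and identifies $e^*$ in $O(|E(A')|)$, and the best port for each of the three candidate arborescences (the original $A'$ and the two pieces at $e^*$) can be found in time proportional to the number of terminals in it, giving a total runtime of $O(|E(A')|+|T_{A'}|)$.
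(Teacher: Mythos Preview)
Your plan is essentially the paper's proof. Both take the minimum of (i) the single-port connection of Lemma~\ref{lem99} and (ii) the two-port connection obtained by cutting $A'$ at an edge realizing $\omega_{A'}$, and both rest on the structural fact that the two pieces at that edge have weights in $[W_{A'}-\mu,\mu]$ (your derivation of $\beta^*\le\mu$ matches the paper's comparison with the weights $W_{z_1},W_{z_2}\le\mu$ at the two root-children). Your overlap-threshold formulation is just a repackaging of the paper's Case~1/Case~2: the paper assumes candidate~(i) fails, uses that failure to bound $D_{A'}$ by a multiple of $C_{A'}$, substitutes into the cost bound for candidate~(ii), and shows the resulting coefficient of $C_{A'}$ is at most $\tfrac23\mu$.

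The one place where you and the paper diverge is the verification of that last inequality, precisely the step you call ``routine''. The paper handles it cleanly: writing the coefficient as a function $f(W_x,W_y)$ of the two piece-weights (with $W_x\ge W_y$), it shows $\partial f/\partial W_x>0$, sets $W_x=\mu$, and maximizes the one-variable function $g(W_y)=f(\mu,W_y)$ by elementary calculus, obtaining the maximum $g(\tfrac23\mu)=\tfrac23\mu$. Your polynomial inequality $a(x-1)(3a-3x+4)\ge 4(1-a)(3ax-3a^2-x)$ is equivalent, and your boundary evaluations $(3x-5)^2\ge 0$ and $\tfrac{x}{4}(3x^2-9x+8)>0$ are correct, but the interior of the cubic in $a$ is where the real content lies: e.g.\ at $x=\tfrac53$, $a=x-1=\tfrac23$ the inequality is \emph{tight}, so a cubic with negative leading coefficient touching zero at one endpoint cannot be dismissed without care. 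Either carry out that cubic analysis explicitly, or adopt the paper's monotonicity-then-1D-optimization, which sidesteps it entirely.
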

\begin{proof}

  If  $ T' $ consists of a  single terminal $ t $ with delay weight $ w(t) > \mu $, we connect it by a shortest $ r $-$ t $-path to  the root $ r $. This gives a total cost of at most
  $$    c(r,t) + w(t) c(r,t)
    \leq \frac{w(t)}{\mu} c(r,t) + D_{T'}
    = \left( 1 + \frac{1}{\mu}\right) D_{T'}.
  $$

  Otherwise, the root $v$ of $A'$ is a Steiner vertex with two outgoing edges $ (v,z_1),(v,z_2)\in \delta _{A'}^+ (v) $, since $ T' $  contains at least two terminals,
  and the deletion of only one edge would contradict the deletion of the edge entering $v$ in $A_0$.
  With  $W_{z_1} := W_{A'_{z_1}}$ and $W_{z_2} := W_{A'_{z_2}}$ this implies $ 0 < W_{z_1}, W_{z_2}\leq \mu $.

  \textbf{Case 1.}
$$    \frac{2\omega_{A'}}{W_{A'}} C_{A'} + \frac{1}{W_{A'}} D_{A'} \leq \frac{2}{3} \mu C_{A'} + \frac{1}{\mu} D_{A'}.
  $$
  We connect the sub-arborescence  as in Lemma \ref{lem99} and obtain a total cost of at most
  $$          \left( 1 + \frac{2\omega _{A'}}{W_{A'}}\right) C_{A'} + \left( 1 +  \frac{1}{W_{A'}}\right) D_{A'}
            \leq \left( 1 + \frac{2}{3} \mu \right) C_{A'} + \left( 1 + \frac{1}{\mu} \right) D_{A'}.
    $$

   \textbf{Case 2.}
                \begin{align}
                        \label{eq93}
                        \left( \frac{2\omega _{A'}}{W_{A'}} - \frac{2}{3} \mu \right) C_{A'}
                        > \left( \frac{1}{\mu} - \frac{1}{W_{A'}}\right) D_{A'}.
                \end{align}

                Since the sub-arborescence $ A' $ has been cut off, we have $ W_{A'} > \mu $ and obtain from (\ref{eq93}) an upper bound on the minimum possible  delay cost of $ A'$
                \begin{align}
                        \label{eq92}
                        D_{A'}
                        < \frac{2\omega _{A'} - \frac{2}{3} \mu W_{A'}}{W_{A'}} \cdot \frac{\mu W_{A'}}{W_{A'} - \mu} C_{A'}
                        = \frac{2}{3} \frac{3\omega _{A'} - \mu W_{A'}}{W_{A'} - \mu} \mu C_{A'}.
                \end{align}

                We split the arborescence $ A' $ along an edge $ e = (x,y)\in E(A') $ that attains the balance   $ \omega_{A'} = W_{A'_y} (W_{A'} - W_{A'_y}) $ as illustrated in
                Figure~\ref{fig:example-split} ($\omega_{A'}$ is defined in  Definition~\ref{def:omega_A}).

                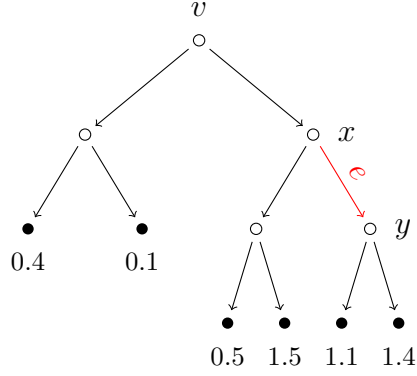
\begin{figure}[!t]
        \centering
        \begin{tikzpicture}
                \node[hcircle,label=90:$ v $] (v) at (0,0) {};
                \node[hcircle] (v1) at (-1.5,-1.25) {};
                \node[hcircle,label=0:$ x $] (v2) at (1.5,-1.25) {};
                \node[fcircle,label=270:\footnotesize{$ 0.4 $}] (v3) at (-2.25,-2.5) {};
                \node[fcircle,label=270:\footnotesize{$ 0.1 $}] (v4) at (-0.75,-2.5) {};
                \node[hcircle] (v5) at (0.75,-2.5) {};
                \node[hcircle,label=0:$ y $] (v6) at (2.25,-2.5) {};
                \node[fcircle,label=270:\footnotesize{$ 0.5 $}] (v7) at (0.375,-3.75) {};
                \node[fcircle,label=270:\footnotesize{$ 1.5 $}] (v8) at (1.125,-3.75) {};
                \node[fcircle,label=270:\footnotesize{$ 1.1 $}] (v9) at (1.875,-3.75) {};
                \node[fcircle,label=270:\footnotesize{$ 1.4 $}] (v10) at (2.625,-3.75) {};

                \draw[->] (v) to (v1);
                \draw[->] (v) to (v2);
                \draw[->] (v1) to (v3);
                \draw[->] (v1) to (v4);
                \draw[->] (v2) to (v5);
                \draw[->,red] (v2) to node[above,sloped] {$ e $} (v6);
                \draw[->] (v5) to (v7);
                \draw[->] (v5) to (v8);
                \draw[->] (v6) to (v9);
                \draw[->] (v6) to (v10);
        \end{tikzpicture}
        \caption{Exemplary illustration of the arborescence $ A' $ and an edge $ e = (x,y)\in E(A') $ with $ \omega _{A'} = W_{A'_y} (W_{A'} - W_{A'_y}) = 2.5 \cdot 2.5 = 6.25$.}
                \label{fig:example-split}
\end{figure}

                Let $W_y := W_{A'_y}$ and $W_x := W_{A'} - W_y$ be the total delay weights in the two resulting  components.
                Since the orientation of the edge $e$ does not matter for our further considerations, we may assume $ W_x \geq W_y $. Similarly, we assume $ W_{z_1}\geq W_{z_2}$. Then,
                \begin{align}\label{eqn:w_x-le-w_z1}
                        \begin{split}
                        \frac{W_{A'}^2}{4} - \left( W_{z_1} - \frac{W_{A'}}{2}\right)^2
                        &= W_{z_1} (W_{A'} - W_{z_1}) \\
                        &\leq \omega_{A'}=W_x(W_{A'} - W_x) \\
                        &= \frac{W_{A'}^2}{4} - \left( W_x - \frac{W_{A'}}{2}\right)^2.
                        \end{split}
                \end{align}

                As $ W_{A'} = W_x + W_y = W_{z_1} + W_{z_2} $ we also get $ W_x, W_{z_1}\geq \frac{W_{A'}}{2} $ and,   $ W_x \leq W_{z_1} $ according to  (\ref{eqn:w_x-le-w_z1}).
                Thus, we also have  $ W_y = W_{A'} - W_x  \geq W_{A'} - W_{z_1} =  W_{z_2} $.

                In the following we always consider the inequality chain
                $0
                        < W_y
                        \leq W_x
                        \leq \mu.$
                We connect the sub-arborescences $ A' - A'_y $ and $ A'_y $ separately to the root $ r $. We can use Lemma \ref{lem99} to obtain a total cost of at most
                \begin{align*}
                        &\hspace{-1em}\underbrace{\left( 1 + \frac{W_x}{2}\right) C_{A'-A'_y} + \left( 1 + \frac{1}{W_x}\right) D_{A'-A'_y}}_{\text{cost bound for } A' - A'_y } + \underbrace{\left( 1 + \frac{W_y}{2}\right) C_{A'_y} + \left( 1 + \frac{1}{W_y}\right) D_{A'_y}}_{\text{cost bound for } A'_y} \\
                        &\leq C_{A'} + D_{A'} + \frac{W_x}{2} C_{A'} + \frac{1}{W_y} D_{A'} \\
                        &= C_{A'} + \left( 1 + \frac{1}{\mu}\right) D_{A'} + \frac{W_x}{2} C_{A'} + \underbrace{\frac{\mu - W_y}{W_y \mu}}_{\geq 0} D_{A'} \\
                        \overset{(\ref{eq92})}&{\leq} \left( 1 + \frac{W_x}{2} + \frac{2}{3} \frac{\mu - W_y}{W_y} \cdot \frac{3\omega _{A'} - \mu W_{A'}}{W_{A'} - \mu}\right) C_{A'} + \left( 1 + \frac{1}{\mu}\right) D_{A'}.
                \end{align*}

                To bound this term we need to find the supremum of  the function $ f\colon (0,\mu ]^2\to \bR $ defined by
                  $$
                  f(x,y)
                        := \frac{x}{2} + \frac{2}{3} \frac{\mu - y}{y}\cdot \frac{3xy - \mu (x+y)}{x+y-\mu}
                  $$
                under the constraints $ x + y > \mu $ and $ x \geq y $. To this end, we first note that
                \begin{align*}
                        \frac{\partial f}{\partial x} (x,y)
                        &= \frac{1}{2} + \frac{2}{3} \frac{\mu - y}{y}\cdot \frac{(3y-\mu )(x+y-\mu ) - (3xy - \mu (x+y))\cdot 1}{(x+y-\mu )^2} \\
                        &= \frac{1}{2} + \frac{2}{3} \frac{\mu - y}{y}\cdot \frac{3y^2 - 3\mu y + \mu ^2}{(x+y-\mu )^2} \\
                        &= \frac{1}{2} + \underbrace{\frac{\mu - y}{y}}_{\geq 0} \cdot \underbrace{\frac{(\sqrt{2} y - \frac{1}{\sqrt{2}} \mu )^2 + \frac{1}{6} \mu ^2}{(x+y-\mu )^2}}_{\geq 0}> 0.
                \end{align*}
                Thus,  $ f(\cdot ,y) $ is strictly monotonically increasing for each $ y\in (0,\mu ] $, and  $ f(x,y)\leq f(\mu ,y) =: g(y) $ satisfying
                \begin{align*}
                        g(y)
                        &= \frac{\mu}{2} + \frac{2}{3} \frac{\mu - y}{y} \cdot \frac{2\mu y - \mu ^2}{y}
                        = \frac{\mu}{2} + \frac{2}{3} \left( \frac{3}{y} \mu ^2 - \frac{\mu ^3}{y^2} - 2\mu \right)
                        = -\frac{5}{6} \mu - \frac{2\mu ^3}{3y^2} + \frac{2\mu ^2}{y}.
                \end{align*}

                A simple analysis of its domain boundary as well as first and second order conditions shows that $g$ attains the maximum  at $ y = \frac{2}{3} \mu $  with value
                $$
                        g\left( \frac{2}{3} \mu \right)
                        = - \frac{5}{6} \mu - \frac{3}{2} \mu + 3 \mu
                        = \frac{2}{3} \mu.
                $$
        Now we take a look at the running time: Note, that during Step 2 of Algorithm \ref{alg:khazraei21}, we can also store with every vertex $v$ the total delay weight in the sub-arborescence rooted at $ v $.

        We can determine in $ \mathcal{O} (|E(A')|+|T_{A'}|) $ time the delay weight $ W_{A'} $, the connection cost $ C_{A'} $, the minimum possible delay cost $ D_{A'} $, and the balance $ \omega _{A'} $ of $ A' $. For $ \omega _{A'} $, we use the total delay weight of each vertex' sub-arborescence computed during Step 2.

        In the same run time we can also compute these values for the sub-arborescences $ A'_y $ and $ A'-A'_y $ and find out the underlying case. We just need to make sure that we can find the ``port'' vertex quickly enough when using Lemma \ref{lem99}. Specifically we find an optimal ``port'' vertex in $ A' $ in $ \mathcal{O} (|E(A')| + |T_{A'}|) $ time as follows:

        For each node $ v\in V(A') $, the cost when using $ v $ as the port is
        \begin{align*}
                \cost_v
                = c(r,v) + C_A'  + \sum _{t\in T'} w(t)\cdot (c(r,v) +  c(E(A'_{[v,t]}))).
        \end{align*}

        So for an edge $ e = (x,y)\in E(A') $ we have
        \begin{align*}
                \cost _x - \cost_y
                &= c(r,x) - c(r,y) + \sum _{t\in T'} w(t) (c(r,x) - c(r,y)) \\
                &\quad + \sum _{t\in T_{A_y}} w(t) \underbrace{(c(E(A'_{[x,t]})) - c(E(A'_{[y,t]})))}_{= c(e)} \\
                &\quad + \sum _{t\in T_{A'}\backslash T_{A_y}} w(t) \underbrace{(c(E(A'_{[x,t]})) - c(E(A'_{[y,t]})))}_{= -c(e)} \\
                &= (c(r,x) - c(r,y)) (1+W_{A'}) + c(e) W_{A_y} - c(e) (W_{A'} - W_{A_y}) \\
                &= (c(r,x) - c(r,y)) (1+W_{A'}) + c(e) (2W_{A_y} - W_{A'}),
        \end{align*}

        This allows us to compute in constant time the  cost for choosing  $ y $ as the port from  the cost for choosing its parent $ x $ as the port. We take advantage of this property and first compute the  cost for using the root of $ A' $ as the port in $ \mathcal{O} (|E(A')| + |T_{A'}|) $. Then, we find the  find the best ``port'' vertex in a top-down traversal in the claimed linear time.

        Using this technique also for the sub-arborescences $ A'_y $ and $ A'-A'_y $ that may occur finishes the proof.

\end{proof}
\setcounter{case}{0}

Note that \cite{Khazraei21} required a running time of
$\mathcal{O}(|T'|^2)$ to select a port vertex.

\subsection{Improving the root component $A_r$}
\label{sec:improve-root-component}
In \cite{Khazraei21} the   arborescence $ A_r $, which   was not cut off in Step 2 of Algorithm~\ref{alg:khazraei21},
was kept unaltered. We  show how to connect it  in a better way.

\begin{lem}
        \label{lem97}
        Let $ A_r $ be the sub-arborescence of $ A_0 $ rooted at $ r $ after Step 2  of Algorithm \ref{alg:khazraei21}.
        The terminal set $ T_{A_r} $ can be connected to the root $ r $ in $ \mathcal{O} (|E(A_r)| + |T_{A_r}|) $ time with total cost at most
        \begin{align*}
                \left( 1 + \frac{\mu}{2} \right) C_{A_r} + \left( 1 + \frac{1}{\mu}\right) D_{A_r}
        \end{align*}
\end{lem}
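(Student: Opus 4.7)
\textbf{Proof plan for Lemma~\ref{lem97}.} The plan is to mirror the two-case strategy of Lemma~\ref{lem98}, specialised to the root component~$A_r$. The essential structural fact, inherited from the bottom-up cutting in Step~2 of Algorithm~\ref{alg:khazraei21}, is that every edge $(x,y)\in E(A_r)$ satisfies $W_{A_{r,y}}\le\mu$; otherwise the edge would already have been removed.

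I first dispose of the degenerate case. If $T_{A_r}=\{t\}$ consists of a single terminal, I connect $t$ via the direct edge $(r,t)$. Since $t$ is reached only via its incoming edge in $A_r$, the Step~2 cutting rule forces $w(t)=W_{A_r}\le\mu$. The cost $c(r,t)+w(t)c(r,t)$ is then bounded by $\bigl(1+\tfrac{\mu}{2}\bigr)C_{A_r}+\bigl(1+\tfrac{1}{\mu}\bigr)D_{A_r}$ using $c(r,t)\le C_{A_r}$ and $D_{A_r}=w(t)c(r,t)$; the remaining slack $\tfrac{\mu}{2}C_{A_r}+\tfrac{w(t)}{\mu}c(r,t)$ is non-negative.

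For the multi-terminal case I distinguish two cases analogous to Lemma~\ref{lem98}. In Case~1, the delay cost $\sum_{(x,y)\in E(A_r)} c(x,y)\,W_{A_{r,y}}$ of $A_r$ used as-is is already at most $\tfrac{\mu}{2}\,C_{A_r}+\bigl(1+\tfrac{1}{\mu}\bigr)D_{A_r}$, and I return $A_r$ unchanged. In Case~2, the reverse strict inequality holds, so the delay in $A_r$ is too concentrated; then I cut $A_r$ along a single edge $e=(x,y)\in E(A_r)$, reconnect the subtree $A_{r,y}$ to $r$ via a port using Lemma~\ref{lem99}, and keep the main component $A_r-A_{r,y}$ (still containing $r$) in place. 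The cut edge is chosen to optimise a function of $W_{A_{r,y}}$ reminiscent of the $f(x,y)$ analysed in Lemma~\ref{lem98}; the worst case of this function should give the coefficient $\mu/2$ on $C_{A_r}$, noting that since no second port is needed for the root side (as opposed to Lemma~\ref{lem98}), we expect to improve the coefficient $\tfrac{2}{3}\mu$ there to $\tfrac{\mu}{2}$ here.

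The main technical obstacle is the delay analysis of the main component $A_r-A_{r,y}$ after the cut. Removing $A_{r,y}$ decreases the subtree weight below each edge on the $r$-to-$x$ path by exactly $W_{A_{r,y}}$, but leaves edges on sibling branches unchanged; so the per-edge bound $W_{A_{r,y}}\le\mu$ cannot simply be halved uniformly. I will therefore track the exact saved delay $\bigl(c(e)+c([r,x])\bigr)\cdot W_{A_{r,y}}$ along the cut path, and offset it against the Lemma~\ref{lem99} cost $\bigl(1+\tfrac{W_{A_{r,y}}}{2}\bigr)C_{A_{r,y}}+\bigl(1+\tfrac{1}{W_{A_{r,y}}}\bigr)D_{A_{r,y}}$ of the reconnected piece; together with the Case~2 hypothesis this will, after optimisation over $W_{A_{r,y}}\in(0,\mu]$, yield the claimed bound.

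For the running time, the subtree weights $W_{A_{r,y}}$ are already recorded at every vertex during Step~2 of Algorithm~\ref{alg:khazraei21}. The Case~1/Case~2 test and the optimal cut edge can then be found in $\mathcal{O}(|E(A_r)|)$ time by a single traversal, and the optimal port vertex inside $A_{r,y}$ by the two-pass technique detailed in the proof of Lemma~\ref{lem98} (precompute a base quantity at the subtree root, then propagate updates in constant time per edge during a top-down traversal). The total running time is then $\mathcal{O}(|E(A_r)|+|T_{A_r}|)$, as claimed.
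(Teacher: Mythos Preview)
Your plan has a genuine gap: a \emph{single} edge cut cannot work, because $A_r$ may have arbitrarily many branches out of $r$, each with subtree weight close to~$\mu$. Fixing one branch does nothing for the others. This is exactly where the analogy with Lemma~\ref{lem98} breaks down: a cut-off arborescence $A'\in\cA$ always satisfies $W_{A'}\le 2\mu$ (both children of its root have weight $\le\mu$), whereas $W_{A_r}$ is unbounded.

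Concretely, take $\mu=2$ and let $r$ have four children $v_1,\dots,v_4$, each $v_j$ with two terminal children $t_{j,1},t_{j,2}$ of weight $\mu/2=1$; let every tree edge have length $L$ and place all terminals at metric distance $0$ from $r$. All subtree weights are $\le\mu$, so this is a structurally valid $A_r$. Here $C_{A_r}=12L$, $D_{A_r}=0$, and the claimed bound is $(1+\mu/2)\,C_{A_r}=24L$. Using $A_r$ as is costs $12L+16L=28L$. Cutting any single edge and reconnecting the detached piece through a port (at cost $0$, since the terminals sit at $r$) still leaves at least three untouched branches contributing $9L$ of connection cost and $12L$ of delay; together with the reconnected piece the total is $25L>24L$. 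No single cut meets the target, and your optimisation over $W_{A_{r,y}}$ cannot repair this.

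The paper's proof sidesteps the problem by first decomposing $A_r$ along \emph{all} edges $(r,x)\in\delta^+_{A_r}(r)$ and treating each branch $A_x$ independently. For a fixed branch the crucial bound $W_{A_x}\le\mu$ now controls the \emph{total} weight of the piece, not just a per-edge subtree weight. One then chooses, per branch, between keeping $(r,x)$ (cost at most $(1+W_{A_x})(C_{A_x}+c(r,x))$) and removing $(r,x)$ and reconnecting the whole $A_x$ via Lemma~\ref{lem99}; a one-variable optimisation over $W_{A_x}\in(0,\mu]$ yields the coefficient $\mu/2$. Summing over the children of $r$ gives the lemma. So the missing idea is to localise the two-case argument to each child of $r$, rather than to search for one global cut.
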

\begin{proof}
        Let $ (r,x)\in \delta _{A_r}^+ (r) $ be arbitrary and $ A_x $ the arborescence of $ A_r-r $ rooted at $ x $. We show that the terminal set $ T_{A_x} $ can be connected to the root $ r $ with total cost at most
        \begin{align*}
                \left( 1 + \frac{\mu}{2}\right) (C_{A_x} + c(r,x)) + \left( 1 + \frac{1}{\mu}\right) D_{A_x}.
        \end{align*}

        Adding this cost for all edges in $ \delta _{A_r}^+ (r) $, we obtain the claim.

        We distinguish again between two cases:

        \textbf{Case 1.}
              $$
                        W_{A_x} (C_{A_x} + c(r,x))
                        \leq \frac{\mu}{2} (C_{A_x} + c(r,x)) + \frac{1}{\mu} D_{A_x}.
                $$
                By keeping the arborescence $ A_x $ connected through $ (r,x) $, the connection cost is $ C_{A_x} + c(r,x) $. In particular, for each terminal $ t\in T_{A_x} $, the $ r $-$ t $-path in $ A_x + (r,x) $ has a length of at most $ C_{A_x} + c(r,x) $. We therefore obtain a total cost of at most
                $$
                        (1 + W_{A_x}) (C_{A_x} + c(r,x))
                        \leq \left( 1 + \frac{\mu}{2} \right) (C_{A_x} + c(r,x)) + \frac{1}{\mu} D_{A_x} .
                $$
        \textbf{Case 2.}
                \begin{align}
                        \label{eq91x}
                        W_{A_x} (C_{A_x} + c(r,x))
                        > \frac{\mu}{2} (C_{A_x} + c(r,x)) + \frac{1}{\mu} D_{A_x}.
                \end{align}

                                Therefore we have $ W_{A_x} > 0 $ and obtain from (\ref{eq91x}) an upper bound on the minimum possible delay cost of $ A_x $
                \begin{align}
                        \label{eq90x}
                        D_{A_x}
                        < \left( W_{A_x} - \frac{\mu}{2}\right) \mu (C_{A_x} + c(r,x)).
                \end{align}

                We remove the edge $ (r,x) $ and connect the arborescence $ A_x $ to the root $ r $ as in Lemma \ref{lem99} and obtain total cost of at most
                \begin{align*}
                        &\hspace{-1em}\left( 1 + \frac{W_{A_x}}{2}\right) C_{A_x} + \left( 1 + \frac{1}{W_{A_x}}\right) D_{A_x} \\
                        &= \left( 1 + \frac{W_{A_x}}{2}\right) C_{A_x} + \left( 1 + \frac{1}{\mu}\right) D_{A_x} + \underbrace{\frac{\mu - W_{A_x}}{\mu W_{A_x}}}_{\geq 0} D_{A_x} \\
                        \overset{(\ref{eq90x})}&{\leq} \left( 1 + \frac{W_{A_x}}{2}\right) C_{A_x} + \left( 1 + \frac{1}{\mu}\right) D_{A_x} + \frac{\mu - W_{A_x}}{\mu W_{A_x}} \left( W_{A_x} - \frac{\mu}{2}\right) \mu (C_{A_x} + c(r,x)) \\
                        &\leq \left( 1 - \frac{W_{A_x}}{2} + \frac{3}{2} \mu - \frac{\mu ^2}{2 W_{A_x}}\right) (C_{A_x} + c(r,x)) + \left( 1 + \frac{1}{\mu}\right) D_{A_x}.
                \end{align*}

                So we need to bound the supremum of  $ f\colon (0 , \mu ]\to \bR $ defined by
                \begin{align*}
                        f(x)
                        := -\frac{x}{2} + \frac{3}{2} \mu - \frac{\mu ^2}{2x} .
                \end{align*}

                Notice that $ f $ is monotonically increasing
                \begin{align*}
                        f'(x)
                        = -\frac{1}{2} + \frac{\mu ^2}{2x^2}
                        \geq -\frac{1}{2} + \frac{1}{2}
                        = 0.
                \end{align*}

                Therefore, we obtain $f(x)
                        \leq f(\mu )
                        = -\frac{\mu}{2} + \frac{3}{2}\mu - \frac{\mu}{2}
                        = \frac{\mu}{2}$, proving the claimed bound.

        For the running time analysis we use the same technique from the proof of Lemma \ref{lem98} with the advantage that here we only need to determine quantities for the single arborescence $ A_r $.
\end{proof}
\setcounter{case}{0}

\subsection{Further improvements}
\label{sec:cut-into-three}

Using Lemmas \ref{lem98} and \ref{lem97}, we can already show a better approximation quality than $1 + \beta$.
For now, let us extend the idea of splitting an arborescence along an edge to multiple edges.

In Case 2 of the proof of Lemma~\ref{lem98} we have used   $ W_x\geq W_y $ to prove
\begin{align}
        \frac{W_x}{2} C_{A'-A'_y} + \frac{1}{W_x} D_{A'-A'_y} + \frac{W_y}{2} C_{A'_y} + \frac{1}{W_y} D_{A'_y}
        \leq \frac{W_x}{2} C_{A'} + \frac{1}{W_y} D_{A'}.
        \label{eqn:tight-inequality}
\end{align}

However, at the end of the proof we found that the cost in Case 2 becomes largest for $ W_x = \mu $ and $ W_y = \frac{2}{3} \mu $. The inequality (\ref{eqn:tight-inequality})
can only be tight if  $ C_{A'-A'_y} = C_{A'} $ and $ D_{A'-A'_y} = 0 $.
Thus, the arborescence $ A'-A'_y $ would have a high delay weight, a high connection cost and low minimum possible delay cost.
$ A'-A'_y $ has the same characteristics that we discussed at the beginning of Section~\ref{sec:cut-into-two}.

Therefore, the arborescence $ A'-A'_y $ should again be split under certain conditions at an edge $(x_1,x_2)\in E(A'-A'_y)$, that attains the
balance $\omega_{A'-A'_y}$.
This way,  we obtain a total of up to three sub-arborescences, which are all connected separately in Step 3 of Algorithm~\ref{alg:khazraei21}.
In our analysis of this approach we will use the following four functions.
\begin{definition}
  Let $ b \ge 0 $, $ \mu > 0 $, and $X^{\mu} := \{ (x,y) \in (0,\mu]^2: x+ y - \mu > 0\}$.
    We define the functions $ \phi _0, \phi _1, \phi _2: X^{\mu} \to \mathbb{R}$ and $\phi _3: X^{\mu} \times (0,\mu]\to \mathbb{R} $  as
        \begin{align*}
                \phi _0 (x,y)
                &:= \frac{\mu - y}{y}\cdot \frac{2xy - b\mu (x+y)}{x+y - \mu} \\
                \phi _1 (x,y)
                &:= \frac{y}{2} + \phi _0 (x,y) \\
                \phi _2 (x,y)
                &:= \frac{3}{8} x + \phi _0 (x,y) \\
                \phi _3 (x,x_2,y)
                &:= \frac{3}{2} x_2 - \frac{x_2 (4x_2 - x)}{2y} + \phi _0 (x,y).
        \end{align*}
\end{definition}

In Lemma~\ref{lem98}, the first  part of the cost bound was $(1 + \frac{2}{3}\mu)C_{A'}$.
We try to replace the coefficient $\frac{2}{3}$ by a smaller number $b$.
First, we derive some sufficient conditions for $b$.

\begin{lem}
        Let $ b\in \bR _{\geq 0} $ be a constant that satisfies for all $\mu > 0$
        \begin{enumerate}
                \item $ \phi _1 (x,y) \leq b\mu $ for all  $ (x,y)\in X^{\mu}$,

                \item $ \phi _2 (x,y) \leq b\mu $ for all  $ (x,y)\in X^{\mu}$, and

                \item $ \phi _3 (x,x_2,y) \leq b\mu $ for all $ (x,y) \in X^{\mu}, x_2 \in (0,\mu ] $.
        \end{enumerate}

        Furthermore, let $ A'\in \cA $ be an arborescence that was cut off in Step 2 of Algorithm~\ref{alg:khazraei21}. Then the corresponding terminals in $ T' $ can be connected to the root $ r $ in $ \mathcal{O} (|E(A')| + |T_{A'}|) $ time with total cost at most
        \begin{align*}
                \left( 1 + b \mu \right) C_{A'} + \left( 1 + \frac{1}{\mu}\right) D_{A'}.
        \end{align*}
\end{lem}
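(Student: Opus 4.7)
The plan is to extend the two-case analysis of Lemma~\ref{lem98} by permitting a second round of balance-splitting when a single split is insufficient. The single-terminal subcase $|T'|=1$ is handled exactly as in Lemma~\ref{lem98}: a direct shortest path to the unique terminal uses no connection cost, so the claim holds trivially. I therefore assume $|T'|\geq 2$ and let $(x,y)\in E(A')$ realise the balance $\omega_{A'}=W_{A'_y}(W_{A'}-W_{A'_y})$, writing $W_x := W_{A'}-W_{A'_y}$ and $W_y := W_{A'_y}$; both lie in $(0,\mu]$ and satisfy $W_x+W_y > \mu$ because the edge entering the root of $A'$ was cut in Step~2.

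If Lemma~\ref{lem99} applied to all of $A'$ already delivers a reconnection cost of at most $(1+b\mu)\,C_{A'}+(1+1/\mu)\,D_{A'}$, there is nothing to prove. Otherwise, rearranging the opposite inequality as in~(\ref{eq92}) but with the constant $2/3$ replaced by $b$ yields
\[
  D_{A'} \;<\; \phi_0(W_x,W_y)\,\mu\,C_{A'},
\]
which will be substituted into every subsequent estimate. Performing the two-way split at $(x,y)$ as in Case~2 of Lemma~\ref{lem98} and substituting this inequality gives a bound of the form $(1+\phi_1(\cdot,\cdot))\,C_{A'}+(1+1/\mu)\,D_{A'}$ evaluated at the two weights $W_x,W_y$; hypothesis~(i) closes this case.

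The remaining regime is the one in which $\phi_1$ exceeds $b\mu$; here the heavier sub-arborescence $A'-A'_y$ still has an unfavourable structure, and I apply the balance-split idea a second time. Let $(x_1,x_2)\in E(A'-A'_y)$ attain $\omega_{A'-A'_y}$, and denote by $W_{x_2}$ the weight on the $x_2$-side. Reconnecting the three pieces $A'_{x_2}$, $(A'-A'_y)-A'_{x_2}$ and $A'_y$ individually by Lemma~\ref{lem99} and summing produces a bound $(1+P)\,C_{A'}+(1+1/\mu)\,D_{A'}$ in which the coefficient $P$ can be estimated in two ways: either by replacing the three $C$-coefficients by a common average (giving the $3x/8$ term of $\phi_2$), or by separating the $A'_{x_2}$-contribution with a sharper port estimate along the shortest root-path through the second balance edge (giving the $3x_2/2 - x_2(4x_2-x)/(2y)$ correction of $\phi_3$). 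After reabsorbing the residual $D_{A'}$ via the displayed bound, $P$ is dominated by $\phi_2(W_x,W_y)$ in the first variant and by $\phi_3(W_x,W_{x_2},W_y)$ in the second, and hypotheses~(ii) and~(iii) close each sub-regime. The main technical obstacle is verifying that these two three-way estimates are jointly exhaustive, i.e.\ that for every admissible triple $(W_x,W_{x_2},W_y)$ not already covered by the two-way bound, at least one of~(ii),~(iii) applies to the specific edge $(x_1,x_2)$.

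For the running time, the sub-arborescence weights cached during Step~2 allow both balance-edges $(x,y)$ and $(x_1,x_2)$ to be located in $\mathcal{O}(|E(A')|)$ time, and the top-down port-selection formula from the proof of Lemma~\ref{lem98} computes the optimal port inside each of the (at most three) pieces in $\mathcal{O}(|E(A')|+|T_{A'}|)$ time overall, matching the claim.
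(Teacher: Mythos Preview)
Your case structure does not work, and the root cause is a misreading of $\phi_1$. You claim that the plain two-way split of Case~2 in Lemma~\ref{lem98} yields a bound of the form $(1+\phi_1(W_x,W_y))\,C_{A'}+(1+1/\mu)\,D_{A'}$, but the two-way split actually produces the coefficient $\tfrac{W_x}{2}+\phi_0(W_x,W_y)$, whereas $\phi_1(x,y)=\tfrac{y}{2}+\phi_0(x,y)$ uses the \emph{smaller} weight $y=W_y$. In the paper the $\phi_1$-bound arises only in the special situation where $A_x$ has a single positive-weight terminal (so its Lemma~\ref{lem99} bound drops the $W_{A_x}/2$ term entirely), or as the ``$W_{A_y}/2$ dominates'' half of a $\max$ in the later subcases. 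Consequently your sentence ``the remaining regime is the one in which $\phi_1$ exceeds $b\mu$'' is vacuous: by hypothesis~(i), $\phi_1\le b\mu$ holds for \emph{every} admissible pair, so there is no such regime, and the two-way split is not settled by~(i).

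The genuine work is a case distinction on the structure of $A_x$: compute the second balance edge $(x_1,x_2)$ and branch on whether $W_{A_{x_1}}\ge 3W_{A_{x_2}}$ (which forces $2\omega_{A_x}/W_{A_x}\le \tfrac{3}{8}W_{A_x}$ and yields the $\phi_2$-bound already for the \emph{two}-way split), and, when $W_{A_{x_1}}<3W_{A_{x_2}}$, on the sign of $\bigl(\tfrac{2\omega_{A_x}}{W_{A_x}}-\tfrac{W_{A_{x_1}}}{2}\bigr)C_{A_x}-\bigl(\tfrac{1}{W_{A_{x_2}}}-\tfrac{1}{W_{A_x}}\bigr)D_{A_x}$, which decides whether the two-way split suffices (via $\phi_3$) or the three-way split is needed (then bounded by $\phi_2$ or $\phi_3$ according to whether $W_{A_{x_2}}\ge W_{A_y}$). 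Your descriptions of $\phi_2$ as a ``common average'' and of $\phi_3$ as a ``sharper port estimate along the shortest root-path'' do not correspond to these computations. Also, your displayed inequality $D_{A'}<\phi_0(W_x,W_y)\,\mu\,C_{A'}$ is off by the factor $(\mu-W_y)/W_y$; the correct version of~(\ref{eq85}) is $D_{A'}<\tfrac{2\omega_{A'}-b\mu W_{A'}}{W_{A'}-\mu}\,\mu\,C_{A'}$, and $\phi_0$ appears only after multiplying by $(\mu-W_y)/(\mu W_y)$ when replacing $1/W_y$ by $1/\mu$.
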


\begin{proof}
        If the arborescence $ A' $ contains only one terminal $ t\in T_{A'} $ with positive delay weight, then the terminals $ T_{A'} $ can be connected to the root $ r $ with a total cost of at most
        \begin{align*}
                C_{A'} + \left( 1 + \frac{1}{W_{A'}}\right) D_{A'}
                \leq (1 + b\mu )C_{A'} + \left( 1 + \frac{1}{\mu}\right) D_{A'}.
        \end{align*}

        In the following, we therefore assume that the arborescence $ A' $ contains at least two terminals with positive delay weight. We now distinguish the following cases:

        \begin{case}
          $$
                        \frac{2\omega _{A'}}{W_{A'}} C_{A'} + \frac{1}{W_{A'}} D_{A'}
                        \leq b\mu C_{A'} + \frac{1}{\mu} D_{A'}.
                        $$

                Then we connect the arborescence $ A' $ as in Lemma \ref{lem99} to the root $ r $ and obtain a total cost of at most
                \begin{align*}
                        \left( 1 + \frac{2\omega _{A'}}{W_{A'}}\right) C_{A'} + \left( 1 +  \frac{1}{W_{A'}}\right) D_{A'}
                        \leq \left( 1 + b\mu \right) C_{A'} + \left( 1 + \frac{1}{\mu} \right) D_{A'}.
                \end{align*}
        \end{case}

        \begin{case}
                \begin{align}
                        \label{eq87}
                        \left( \frac{2\omega _{A'}}{W_{A'}} - b\mu \right) C_{A'}
                        > \left( \frac{1}{\mu} - \frac{1}{W_{A'}}\right) D_{A'}.
                \end{align}

                Since the sub-arborescence $ A' $ has been cut off, we have $ W_{A'} > \mu $ and we obtain from (\ref{eq87}) an upper bound on the minimum possible delay cost of $ A' $
                \begin{align}
                        \label{eq85}
                        D_{A'}
                        < \frac{2\omega _{A'} - b\mu W_{A'}}{W_{A'}} \cdot \frac{\mu W_{A'}}{W_{A'} - \mu} C_{A'}
                        = \frac{2\omega _{A'} - b\mu W_{A'}}{W_{A'} - \mu} \mu C_{A'}.
                \end{align}

                We split the arborescence $ A' $ at an edge $ e = (x,y)\in E(A') $ so that the resulting arborescences $A'- A'_y $ and $ A'_y  $ satisfy the equation $\omega _{A'} = W_{(A'-A'_y)} W_{A_y} $.

                To shorten our notation, we define $A_y := A'_y$ and $A_x := A'-A_y$, to be the two arborescences in $A'-e$ containing $y$ and $x$ (though $A_x$ need not be rooted at $x$).

                Since the orientation of the edge $ e $ does not matter for our further considerations, we can assume $ W_{A_x}\geq W_{A_y} $ without loss of generality. As discussed in the proof of Lemma \ref{lem98}, the following bounds hold:
                \begin{itemize}
                        \item $ W_{A_x}, W_{A_y} > 0 $ due to  $|V(A')\cap T|\ge 2$ and the choice of $e$, and

                        \item $ W_{A_x}, W_{A_y}\leq \mu $, since otherwise the arborescences $ A_x $ and $ A_y $  would have been cut off earlier.
                \end{itemize}

                If the arborescence $ A_x $ again contains only one terminal with positive delay weight, we connect the arborescences $ A_x $ and $ A_y $ respectively as in Lemma \ref{lem99} to the root $ r $.
                This  yields a total cost of at most
                \begin{align*}
                        &\hspace{-1em} \underbrace{C_{A_x} + \left( 1 + \frac{1}{W_{A_x}}\right) D_{A_x}}_{\text{cost bound for } A_x} + \underbrace{\left( 1 + \frac{W_{A_y}}{2}\right) C_{A_y} + \left( 1 + \frac{1}{W_{A_y}}\right) D_{A_y}}_{\text{cost bound for } A_y} \\
                        &\leq \left( 1 + \frac{W_{A_y}}{2}\right) C_{A'} + \left( 1 + \frac{1}{W_{A_y}}\right) D_{A'} \\
                        &= \left( 1 + \frac{W_{A_y}}{2}\right) C_{A'} + \left( 1 + \frac{1}{\mu}\right) D_{A'} + \underbrace{\frac{\mu - W_{A_y}}{\mu W_{A_y}}}_{\geq 0} D_{A'} \\
                        \overset{(\ref{eq85})}&{\leq} \Bigl( 1 + \underbrace{\frac{W_{A_y}}{2} + \frac{\mu - W_{A_y}}{W_{A_y}}\cdot \frac{2\omega _{A'} - b\mu W_{A'}}{W_{A'} - \mu}}_{= \phi _1 (W_{A_x}, W_{A_y})} \Bigr) C_{A'} + \left( 1 + \frac{1}{\mu}\right) D_{A'} \\
                        &\leq (1 + b\mu ) C_{A'} + \left( 1 + \frac{1}{\mu}\right) D_{A'}.
                \end{align*}

                So we assume that the arborescence $ A_x $ also contains at least two terminals with positive delay weight.
                This implies $ \omega _{A_x} > 0$. Let $ e' = (x_1, x_2)\in E(A_x) $ be an edge that attains the positive balance $ \omega _{A_x}$.

                Again, for a shorter notation we define $A_{x_2} := (A_x)_{x_2}$ and $A_{x_1} := (A_x - (A_x)_{x_2})$ to be the
                two arborescences in $ A_x - e' $ containing $x_2$ and $x_1$.
                By the choice of $e'$, we have $ \omega _{A_x}  = W_{A_{x_1}}\cdot W_{A_{x_2}}$.
                As $ \omega _{A_x} > 0$, $ W_{A_{x_1}}, W_{A_{x_2}} > 0 $.
                Without loss of generality  $ W_{A_{x_1}} \geq W_{A_{x_2}} $.

                In the third of the following cases, we  will further split the arborescence $A_x$ into $A_{x_1}$ and $A_{x_2}$:
                \begin{subcase}
                  $$ W_{A_{x_1}} \geq 3 W_{A_{x_2}}. $$

                  This implies $ W_{A_x} = W_{x_1}+W_{x_2} \geq 4 W_{A_{x_2}}$.
                  We obtain a better upper bound for $ \frac{2\omega _{A_x}}{W_{A_x}} $ than $ \frac{W_{A_x}}{2} $, because
                        \begin{align*}
                                \frac{2\omega _{A_x}}{W_{A_x}}
                                &= \frac{2 W_{A_{x_1}} W_{A_{x_2}}}{W_{A_x}} \\
                                &= \frac{W_{A_x}}{2} - \frac{2\left( W_{A_{x_1}} - \frac{W_{A_x}}{2}\right) \left( \frac{W_{A_x}}{2} - W_{A_{x_2}}\right)}{W_{A_x}} \\
                                &= \frac{W_{A_x}}{2} - \frac{2\left( \frac{W_{A_x}}{2} - W_{A_{x_2}}\right) ^2}{W_{A_x}} \\
                                &\leq \frac{W_{A_x}}{2} - \frac{2\left( \frac{W_{A_x}}{2} - \frac{W_{A_x}}{4}\right) ^2}{W_{A_x}} \\
                                &= \frac{3}{8} W_{A_x}.
                        \end{align*}

                        By connecting the arborescences $ A_x $ and $ A_y $ separately as in Lemma \ref{lem99} to the root $ r $, we obtain a total cost of at most
                        \begin{align*}
                                &\hspace{-1em} \underbrace{\left( 1 + \frac{2\omega _{A_x}}{W_{A_x}}\right) C_{A_x} + \left( 1 + \frac{1}{W_{A_x}}\right) D_{A_x}}_{\text{cost bound for } A_x} + \underbrace{\left( 1 + \frac{W_{A_y}}{2}\right) C_{A_y} + \left( 1 + \frac{1}{W_{A_y}}\right) D_{A_y}}_{\text{cost bound for } A_y} \\
                                &\leq \left( 1 + \frac{3}{8} W_{A_x}\right) C_{A_x} + \left( 1 + \frac{W_{A_y}}{2}\right) C_{A_y} + \left( 1 + \frac{1}{W_{A_y}}\right) D_{A_v} \\
                                &\leq \left( 1 + \max \left\{ \frac{3}{8} W_{A_x}, \frac{W_{A_y}}{2}\right\} \right) C_{A'} + \left( 1 + \frac{1}{\mu}\right) D_{A'} + \underbrace{\frac{\mu - W_{A_y}}{\mu W_{A_y}}}_{\geq 0} D_{A'} \\
                                \overset{(\ref{eq85})}&{\leq} \Bigl( 1 + \underbrace{\max \left\{ \frac{3}{8} W_{A_x}, \frac{W_{A_y}}{2}\right\} + \frac{\mu - W_{A_y}}{W_{A_y}}\cdot \frac{2\omega _{A'} - b\mu W_{A'}}{W_{A'} - \mu}}_{= \max \{ \phi _2(W_{A_x}, W_{A_y}), \phi _1 (W_{A_x},W_{A_y})\}}\Bigr) C_{A'} + \left( 1 + \frac{1}{\mu}\right) D_{A'} \\
                                &\leq ( 1 + b\mu ) C_{A'} + \left( 1 + \frac{1}{\mu}\right) D_{A'}.
                        \end{align*}
                \end{subcase}

                \begin{subcase}
                  $ W_{A_{x_1}} < 3 W_{A_{x_2}} $ \textbf{and}
                        \begin{align}
                                \label{eq84}
                                \left( \frac{2\omega _{A_x}}{W_{A_x}} - \frac{W_{A_{x_1}}}{2}\right) C_{A_x}
                                < \left( \frac{1}{W_{A_{x_2}}} - \frac{1}{W_{A_x}}\right) D_{A_x} .
                        \end{align}

                        Then,
                        \begin{align*}
                                \frac{2\omega _{A_x}}{W_{A_x}} - \frac{W_{A_{x_1}}}{2}
                                = \frac{4 W_{A_{x_1}} W_{A_{x_2}} - W_{A_{x_1}} (W_{A_{x_1}} + W_{A_{x_2}})}{2W_{A_x}}
                                = \frac{W_{A_{x_1}} (3 W_{A_{x_2}} - W_{A_{x_1}})}{2 W_{A_x}}
                                > 0.
                        \end{align*}

                        Thus, (\ref{eq84}) provides an upper bound on the connection cost of $ A_x $
                        \begin{align}
                                \label{eq83}
                                C_{A_x}
                                < \frac{W_{A_x} - W_{A_{x_2}}}{W_{A_{x_2}} W_{A_x}} \cdot \frac{2 W_{A_x}}{W_{A_{x_1}} (3 W_{A_{x_2}} - W_{A_{x_1}})} D_{A_x}
                                = \frac{2}{W_{A_{x_2}} (3 W_{A_{x_2}} - W_{A_{x_1}})} D_{A_x}.
                        \end{align}

                        We set $ \alpha := \left( \frac{1}{W_{A_y}} - \frac{1}{W_{A_x}}\right) \frac{W_{A_{x_2}} (3 W_{A_{x_2}} - W_{A_{x_1}})}{2}\geq 0 $ and get
                        \begin{align}
                                \begin{split}
                                        \frac{2\omega _{A_x}}{W_{A_x}} - \alpha
                                        &= \frac{2W_{A_{x_1}} W_{A_{x_2}}}{W_{A_x}} + \left( \frac{1}{W_{A_x}} - \frac{1}{W_{A_y}}\right) \frac{W_{A_{x_2}} (3 W_{A_{x_2}} - W_{A_{x_1}})}{2} \\
                                        &= \frac{3W_{A_{x_1}} W_{A_{x_2}} + 3 W_{A_{x_2}}^2}{2 W_{A_x}} - \frac{W_{A_{x_2}} (4 W_{A_{x_2}} - W_{A_x})}{2 W_{A_y}} \\
                                        &= \frac{3}{2} W_{A_{x_2}} - \frac{W_{A_{x_2}} (4 W_{A_{x_2}} - W_{A_x})}{2 W_{A_y}} .
                                \end{split}
                        \end{align}

                        Connecting $ A_x $ and $ A_y $ to the root $ r $ separately as in Lemma \ref{lem99}, we obtain total costs of at most
                        \begin{align*}
                                &\hspace{-1em} \underbrace{\left( 1 + \frac{2\omega _{A_x}}{W_{A_x}}\right) C_{A_x} + \left( 1 + \frac{1}{W_{A_x}}\right) D_{A_x}}_{\text{cost bound for } A_x} + \underbrace{\left( 1 + \frac{W_{A_y}}{2}\right) C_{A_y} + \left( 1 + \frac{1}{W_{A_y}}\right) D_{A_y}}_{\text{cost bound for } A_y} \\
                                \overset{(\ref{eq83})}&{<}\!\! \left( 1 + \frac{2\omega _{A_x}}{W_{A_x}} - \alpha \right)\! C_{A_x}\! + \!\left( 1 + \frac{1}{W_{A_y}}\right)\! D_{A_x}\! + \!\left( 1 + \frac{W_{A_y}}{2}\right)\! C_{A_y}\! + \!\left( 1 + \frac{1}{W_{A_y}}\right)\! D_{A_y} \\
                                &\leq \left( 1 + \max \left\{ \frac{2\omega _{A_x}}{W_{A_x}} - \alpha , \frac{W_{A_y}}{2}\right\} \right) C_{A'} + \left( 1 + \frac{1}{\mu}\right) D_{A'} + \underbrace{\frac{\mu - W_{A_y}}{\mu W_{A_y}}}_{\geq 0} D_{A'} \\
                                \overset{(\ref{eq85})}&{\leq} \Bigl( 1 + \underbrace{\max \left\{ \frac{2\omega _{A_x}}{W_{A_x}} - \alpha , \frac{W_{A_y}}{2}\right\} + \frac{\mu - W_{A_y}}{W_{A_y}}\cdot \frac{2\omega _{A'} - b\mu W_{A'}}{W_{A'} - \mu}}_{ \overset{(\ref{eq83})}{=} \max \{ \phi _3 (W_{A_x}, W_{A_{x_2}}, W_{A_y}), \phi _1 (W_{A_x}, W_{A_y})\} } \Bigr) C_{A'} + \left( 1 + \frac{1}{\mu}\right) D_{A'} \\
                                &\leq (1 + b\mu ) C_{A'} + \left( 1 + \frac{1}{\mu}\right) D_{A'}.
                        \end{align*}
                \end{subcase}

                \begin{subcase}
                         $ W_{A_{x_1}} < 3 W_{A_{x_2}} $ \textbf{and}
                        \begin{align*}
                                \left( \frac{2\omega _{A_x}}{W_{A_x}} - \frac{W_{A_{x_1}}}{2}\right) C_{A_x}
                                \geq \left( \frac{1}{W_{A_{x_2}}} - \frac{1}{W_{A_x}}\right) D_{A_x}.
                        \end{align*}

                        This immediately gives us an upper bound on the minimum possible delay cost of $ A_x $
                        \begin{align}
                                \label{eq82}
                                D_{A_x}
                                \leq \frac{W_{A_{x_2}} (3 W_{A_{x_2}} - W_{A_{x_1}})}{2} C_{A_x}.
                        \end{align}

                        In this case, we split the arborescence $ A_x $ along the edge $ e' $ and connect the arborescences $ A_{x_1}, A_{x_2} $ and $ A_y $ separately to the root $ r $ as in Lemma \ref{lem99}. This leads to a total cost of at most
                        \begin{align*}
                                &\hspace{-1em} \underbrace{\left( 1 + \frac{W_{A_{x_1}}}{2}\right) C_{A_{x_1}} + \left( 1 + \frac{1}{W_{A_{x_1}}}\right) D_{A_{x_1}}}_{\text{cost bound for } A_{x_1}} + \underbrace{\left( 1 + \frac{W_{A_{x_2}}}{2}\right) C_{A_{x_2}} + \left( 1 + \frac{1}{W_{A_{x_2}}}\right) D_{A_{x_2}}}_{\text{cost bound for } A_{x_2}} \\
                                &\qquad + \underbrace{\left( 1 + \frac{W_{A_y}}{2}\right) C_{A_y} + \left( 1 + \frac{1}{W_{A_y}}\right) D_{A_y}}_{\text{cost bound for } A_y} \\
                          &\leq \left( 1 + \frac{W_{A_{x_1}}}{2}\right) C_{A_x} + \left( 1 + \frac{1}{W_{A_{x_2}}}\right) D_{A_x} + \left( 1 + \frac{W_{A_y}}{2}\right) C_{A_y} + \left( 1 + \frac{1}{W_{A_y}}\right) D_{A_y}\\
                          & =: \Phi.
                        \end{align*}

                        From $ W_{A_{x_1}} < 3 W_{A_{x_2}} $, we  get
                        \begin{align}
                                \label{eq81}
                                W_{A_{x_1}}
                                = \frac{3}{4} W_{A_{x_1}} + \frac{1}{4} W_{A_{x_1}}
                                < \frac{3}{4} W_{A_x}.
                        \end{align}

                        If we additionally assume that $ W_{A_{x_2}}\geq W_{A_y} $, we get
                        \begin{align*}
\Phi                                \overset{(\ref{eq81})}&{\leq} \left( 1 + \frac{3}{8} W_{A_x}\right) C_{A_x} + \left( 1 + \frac{1}{W_{A_y}}\right) D_{A_x} + \left( 1 + \frac{W_{A_y}}{2}\right) C_{A_y} + \left( 1 + \frac{1}{W_{A_y}}\right) D_{A_y} \\
                                &\leq \left( 1 + \max \left\{ \frac{3}{8} W_{A_x}, \frac{W_{A_y}}{2}\right\} \right) C_{A'} + \left( 1 + \frac{1}{\mu}\right) D_{A'} + \underbrace{\frac{\mu - W_{A_y}}{\mu W_{A_y}}}_{\geq 0} D_{A'} \\
                                \overset{(\ref{eq85})}&{\leq} \Bigl( 1 + \underbrace{\max \left\{ \frac{3}{8} W_{A_x}, \frac{W_{A_y}}{2}\right\} + \frac{\mu - W_{A_y}}{W_{A_y}}\cdot \frac{2\omega _{A'} - b\mu W_{A'}}{W_{A'} - \mu}}_{ = \max \{ \phi _2 (W_{A_x}, W_{A_y}), \phi _1 (W_{A_x}, W_{A_y})\}} \Bigr) C_{A'} + \left( 1 + \frac{1}{\mu}\right) D_{A'} \\
                                &\leq (1 + b\mu ) C_{A'} + \left( 1 + \frac{1}{\mu}\right) D_{A'}.
                        \end{align*}

                        Otherwise, if $ W_{A_{x_2}} < W_{A_y} $, we set $ \alpha ' := \left( \frac{1}{W_{A_{x_2}}} - \frac{1}{W_{A_y}}\right) \frac{W_{A_{x_2}} (3 W_{A_{x_2}} - W_{A_{x_1}})}{2} \geq 0 $. This gives us
                        \begin{align}
                                \label{eq80}
                                \frac{W_{A_{x_1}}}{2} + \alpha '
                                = \frac{3}{2} W_{A_{x_2}} - \frac{W_{A_{x_2}} (3 W_{A_{x_2}} - W_{A_{x_1}} )}{2 W_{A_y}} .
                        \end{align}

                        and
                        \begin{align*}
                            \Phi    \overset{(\ref{eq82})}&{\leq} \left( 1 + \frac{W_{A_{x_1}}}{2} + \alpha '\right) C_{A_x} + \left( 1 + \frac{W_{A_y}}{2}\right) C_{A_y} + \left( 1 + \frac{1}{W_{A_y}}\right) D_{A'} \\
                                &\leq \left( 1 + \max \left\{ \frac{W_{A_{x_1}}}{2} + \alpha ', \frac{W_{A_y}}{2}\right\} \right) C_{A'} + \left( 1 + \frac{1}{\mu}\right) D_{A'} + \underbrace{\frac{\mu - W_{A_y}}{\mu W_{A_y}}}_{\geq 0} D_{A'} \\
                                \overset{(\ref{eq85})}&{\leq} \!\Bigl( 1 + \underbrace{\max \left\{ \frac{W_{A_{x_1}}}{2} + \alpha ', \frac{W_{A_y}}{2}\right\} + \frac{\mu - W_{A_y}}{W_{A_y}}\cdot \frac{2\omega _{A'} - b\mu W_{A'}}{W_{A'} - \mu}}_{ \overset{(\ref{eq80})}{=} \max \{ \phi _3 (W_{A_x}, W_{A_{x_2}}, W_{A_y}), \phi _1 (W_{A_x}, W_{A_y})\}} \Bigr)C_{A'}\! +\! \left( 1 + \frac{1}{\mu}\right)\! D_{A'} \\
                                &\leq (1 + b\mu ) C_{A'} + \left( 1 + \frac{1}{\mu}\right) D_{A'}.
                        \end{align*}
                \end{subcase}
        \end{case}

        This concludes the proof, since in each case we satisfy the claimed cost bound.

        The running time analysis is almost identical to Lemma \ref{lem98}: We just compute
        the quantities delay weight, connection cost, minimum possible delay cost, and balance for up to two more sub-arborescences of $ A' $ .
\end{proof}

The question is, how small we can choose the parameter $ b$. We need $ b < \frac{2}{3} $  to obtain an improvement over
Lemma \ref{lem98}.

\begin{lem}
        The functions $ \phi _1 $, $ \phi _2 $ and $ \phi _3 $ satisfy the following properties
        \begin{enumerate}
                \item $ \phi _1 (x,y)\leq b\mu $ for all $ \mu \in \bR _{> 0} $ and $ (x,y)\in X^{\mu} $ $\quad\Longleftrightarrow\quad$ $ b\geq \frac{16}{27} $.

                \item  $ \phi _2 (x,y)\leq b\mu $ for all $ \mu \in \bR _{> 0} $ and $ (x,y)\in X^{\mu}$ $\quad\Longleftrightarrow\quad$ $ b\geq \frac{8}{13} $.

                \item  $ \phi _3 (x,x_2,y)\leq b\mu $ for all $ \mu \in \bR _{> 0}$ and $(x,y,x_2)\in X^{\mu}\!\times\! (0,\mu ] $ \\$\Longleftrightarrow$ $ b\geq \frac{1609\sqrt{1609} - 42427}{34992} $.
        \end{enumerate}
\end{lem}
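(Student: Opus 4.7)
The plan is to reduce each of the three inequalities to an explicit maximization problem in one or two real variables, and identify the threshold $b$ with the resulting supremum.

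\textbf{Steps 1--2 (Homogeneity and separating $b$).} The functions $\phi_1, \phi_2, \phi_3$ are jointly homogeneous of degree one in $(\mu, x, y, x_2)$ (with $b$ a dimensionless parameter), so I first rescale to $\mu = 1$. Writing $\phi_0(x,y) = \psi_0(x,y) - b\,\chi_0(x,y)$ with $\psi_0 = \frac{2x(1-y)}{x+y-1}$ and $\chi_0 = \frac{(1-y)(x+y)}{y(x+y-1)}$, a short calculation yields the key simplification
\begin{equation*}
  1 + \chi_0(x,y) \;=\; \frac{x}{y(x+y-1)} \;>\; 0 \text{ on } X^1,
\end{equation*}
which turns $\phi_i \le b$ into the equivalent $b$-free condition $b \ge F_i := \psi_i/(1+\chi_0)$, where $\psi_i$ denotes the $b$-free part of $\phi_i$. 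Hence the minimal admissible $b$ in each claim is exactly $\sup F_i$ over the corresponding domain.

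\textbf{Step 3 (Cases $i = 1, 2$).} A direct computation gives
\begin{equation*}
  F_1(x,y) = \frac{y^2(x+y-1)}{2x} + 2y(1-y), \qquad F_2(x,y) = \frac{3y(x+y-1)}{8} + 2y(1-y).
\end{equation*}
Both are nondecreasing in $x$ (one has $\partial_x F_1 = y^2(1-y)/(2x^2) \ge 0$, the other $\partial_x F_2 = 3y/8 > 0$), so each supremum is attained at $x = 1$. The remaining one-variable optimizations in $y$ have polynomial first-order conditions whose admissible roots are $y = 2/3$ (from $3y^2 - 8y + 4 = 0$) and $y = 8/13$, yielding $\sup F_1 = 16/27$ and $\sup F_2 = 8/13$ respectively.

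\textbf{Step 4 (Case $i = 3$; main obstacle).} An analogous computation gives
\begin{equation*}
  F_3(x, x_2, y) = \frac{x_2(x + y - 1)(3y + x - 4x_2)}{2x} + 2y(1-y),
\end{equation*}
which is a downward parabola in $x_2$ (leading coefficient $-2(x+y-1)/x < 0$), maximized at $x_2^\ast = (3y + x)/8 \in (0, \tfrac12]$. Substituting yields $G(x, y) := (3y+x)^2(x+y-1)/(32x) + 2y(1-y)$. I expect $G$ to be strictly increasing in $x$ on $X^1$; proving this reduces to showing that the numerator $2x^2 + (y-1)x + 3y(1-y)$ of $\partial_x G$ is positive on $X^1$, which follows from a discriminant analysis (the discriminant in $x$ equals $(25y-1)(y-1)$, which is $\le 0$ on $[1/25, 1]$; for $y \in (0, 1/25)$ both real roots lie below $1-y$ and hence outside the admissible interval $(1-y, 1]$). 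Setting $x = 1$ leaves $G(1, y) = y(9y^2 - 58y + 65)/32$, whose first-order condition $27y^2 - 116y + 65 = 0$ has discriminant $6436 = 4 \cdot 1609$ and admissible root $y^\ast = (58 - \sqrt{1609})/27$. Using this relation to replace $9y^{\ast 2} - 58y^\ast + 65$ by $(130 - 58y^\ast)/3$ and simplifying $(58 - \sqrt{1609})(73 + 29\sqrt{1609})$ in the numerator then produces the closed form $(1609\sqrt{1609} - 42427)/34992$, completing the third claim. The hardest part is not the final algebraic identity but verifying that the maximum of $G$ genuinely sits on the boundary $x = 1$ across the full range of $y$.
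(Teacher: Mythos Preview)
Your proof is correct and takes a genuinely different route from the paper's argument. The paper works directly with the functions $\phi_i$, which still contain $b$; to get monotonicity in $x$ it first assumes $b\ge\tfrac12$, proves $\partial_x\phi_0\ge 0$ under that hypothesis, sets $x=\mu$, and then factors the resulting one-variable expression in $y$. The case $b<\tfrac12$ is then handled separately by explicit counterexamples. Your rearrangement, using the identity $1+\chi_0(x,y)=\tfrac{x}{y(x+y-1)}$, converts $\phi_i\le b$ into $b\ge F_i$ with $F_i$ completely $b$-free, so the problem becomes a single optimisation of a simple rational function, and both implications follow at once from identifying $\sup F_i$. In particular, the monotonicity in $x$ that you prove (trivially for $F_1,F_2$, and via a discriminant argument for $G$) requires no hypothesis on $b$, which eliminates the paper's $b\ge\tfrac12$ case split and its ad hoc counterexamples. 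The paper's polynomial factorizations at the end are slightly more self-contained (no appeal to first-order conditions), but your approach is cleaner overall and the algebra is lighter; one could mention for completeness that the critical $y$-values you find are interior and beat the boundary values $y\to 0^+$ and $y=1$, but this is immediate.
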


\begin{proof}
  For each of the functions, we first   proof the statement under the premise  $b\ge \frac{1}{2} $ and
  then give a counter example for  $b<\frac{1}{2}$.

  If $b \ge \frac{1}{2}$, $ \phi _0 $ is monotonically non-decreasing in the first parameter, as
        \begin{align*}
                \frac{\partial \phi _0}{\partial x} (x,y)
                &= \frac{\mu - y}{y}\cdot \frac{(2y - b\mu )(x+y-\mu ) - (2xy - b\mu (x+y))\cdot 1}{(x+y-\mu )^2} \\
                &= \frac{\mu - y}{y}\cdot \frac{2y^2 - 2\mu y + b\mu ^2}{(x+y-\mu )^2} \\
                &= \frac{\mu - y}{y}\cdot \frac{\left( \sqrt{2} y - \frac{1}{\sqrt{2}} \mu\right) ^2 + \left( b - \frac{1}{2}\right) \mu ^2}{(x+y-\mu )^2} \\
                &\geq 0.
        \end{align*}

        This property transfers directly to the functions $ \phi _1, \phi _2 $ and $ \phi _3 $. We now look at the individual functions:
        \begin{enumerate}
                \item For  $ \phi _1 $ we obtain by  using the monotonicity property under the assumption $b \ge \frac{1}{2}$
                        \begin{align*}
                                \phi _1 (x,y)
                                &\leq \phi _1 (\mu ,y) \\
                                &= \frac{y}{2} + \left( \frac{\mu}{y} - 1\right) \left( (2-b)\mu - \frac{b\mu ^2}{y}\right) \\
                                &= \frac{y}{2} + (b-2)\mu + \frac{2\mu ^2}{y} - \frac{b\mu ^3}{y^2} \\
                                &= \frac{1}{2y^2} \left( y^3 - 4\mu y^2 + 4\mu ^2 y - \frac{32}{27} \mu ^3\right) + \left( \frac{16}{27} - b\right) \frac{\mu ^3}{y^2} + b\mu \\
                                &= \frac{1}{2y^2} \left( y^2 - \frac{4}{3} \mu y + \frac{4}{9} \mu ^2\right) \left( y - \frac{8}{3} \mu \right) + \left( \frac{16}{27} - b\right) \frac{\mu ^3}{y^2} + b\mu \\
                                &= \frac{1}{2y^2} \underbrace{\left( y - \frac{2}{3} \mu \right) ^2}_{\geq 0} \underbrace{\left( y - \frac{8}{3} \mu \right)}_{\leq 0} + \left( \frac{16}{27} - b\right) \frac{\mu ^3}{y^2} + b\mu .
                        \end{align*}

                        Since we can choose $ y = \frac{2}{3} \mu $, the first statement becomes evident.

                        From the last line in the equation, we also see that for   $b<  \frac{1}{2}$, $\phi_1(\mu,\frac{2}{3}\mu) > b\mu$.

                      \item For  $ \phi _2 $ we obtain by using the monotonicity property under the assumption $b \ge \frac{1}{2}$
                        \begin{align*}
                                \phi _2 (x,y)
                                &\leq \phi _2 (\mu ,y) \\
                                &= \frac{3}{8} \mu + (b-2) \mu + \frac{2\mu ^2}{y} - \frac{b\mu ^3}{y^2} \\
                                &= -\frac{13}{8} \mu + \frac{2\mu ^2}{y} - \frac{8}{13} \frac{\mu ^3}{y^2} + \left( \frac{8}{13} - b\right) \frac{\mu ^3}{y^2} + b\mu \\
                                &= -\frac{13 \mu}{8 y^2} \left( y^2 - \frac{16}{13} \mu y + \frac{64}{169} \mu ^2 \right) + \left( \frac{8}{13} - b\right) \frac{\mu ^3}{y^2} + b\mu \\
                                &= -\frac{13 \mu}{8 y^2} \underbrace{\left( y - \frac{8}{13} \mu \right) ^2}_{\geq 0} + \left( \frac{8}{13} - b\right) \frac{\mu ^3}{y^2} + b\mu .
                        \end{align*}

                        Since we can choose $ y = \frac{8}{13} \mu $, the second statement becomes evident.
                        From the last line in the equation, we also see that for   $b<  \frac{1}{2}$, $\phi_2(\mu,\frac{8}{13}\mu) > b\mu$.

                \item For $\phi _3 $ we also use the monotonicity property $ \phi _3 (x,x_2,y)\leq \phi _3 (\mu ,x_2,y) =: \psi (x_2,y) $ for $b\ge \frac{1}{2}$. Thus,
                        \begin{align*}
                                \psi (x_2,y)
                                &= \frac{3}{2} x_2 - \frac{4x_2^2 - \mu x_2}{2y} + (b-2)\mu + \frac{2\mu ^2}{y} - \frac{b\mu ^3}{y^2} \\
                                &= - \frac{2}{y} x_2^2 + \left( \frac{3}{2} + \frac{\mu}{2y}\right) x_2 + (b-2)\mu + \frac{2\mu ^2}{y} - \frac{b\mu ^3}{y^2} .
                        \end{align*}

                        For any fixed $ y $, $ \psi (\cdot , y) $ is a quadratic function with negative leading coefficient.
                        The first order condition is
                        \begin{align*}
                                \frac{\partial \psi}{\partial x_2} (x_2,y)
                                = - \frac{4}{y} x_2 + \frac{3}{2} + \frac{\mu}{2y}
                                \overset{!}{=} 0 \quad \Longleftrightarrow \quad x_2
                                = \frac{3}{8} y + \frac{1}{8} \mu
                        \end{align*}

                        Therefore, $ x_2 = \frac{3}{8} y + \frac{1}{8} \mu $ is a global maximum. Furthermore, because $ y\in (0,\mu ] $ we also have $ x_2\in (0,\mu ] $. To simplify the following calculation a bit, we set the following auxiliary quantity
                        \begin{align*}
                                \alpha
                                := \left( \frac{1609 \sqrt{1609} - 42427}{34992} - b\right) \frac{\mu ^3}{y^2} + b\mu.
                        \end{align*}

                        Now, we get
                        \begin{align*}
                                \psi (x_2,y)
                                &\leq \psi \left( \frac{3}{8} y + \frac{1}{8}\mu , y \right) \\
                                &= - \frac{2}{y} \left( \frac{3}{8} y + \frac{1}{8} \mu \right) ^2 + \left( \frac{3}{2} + \frac{\mu}{2y}\right) \left( \frac{3}{8} y + \frac{1}{8} \mu \right) + (b-2) \mu + \frac{2\mu ^2}{y} - \frac{b\mu ^3}{y^2} \\
                                &= - \frac{1}{32y}\! (9 y^2 + 6\mu y + \mu ^2)\! + \!\left( \frac{9}{16} y + \frac{3}{8} \mu + \frac{\mu ^2}{16 y}\right)\! + \!(b-2)\mu + \frac{2\mu ^2}{y} - \frac{b\mu ^3}{y^2} \\
                                &= \frac{9}{32} y - \frac{29}{16} \mu + \frac{65 \mu ^2}{32 y} - \frac{1609\sqrt{1609} - 42427}{34992} \frac{\mu ^3}{y^2} + \alpha \\
                                &= \frac{9}{32 y^2} \left( y^3 - \frac{58}{9} \mu y^2 + \frac{65}{9} \mu ^2 y - \frac{3218\sqrt{1609} - 84854}{19683} \mu ^3\right) + \alpha \\
                                &= \frac{9}{32 y^2} \left( y^2 - \frac{116 - 2\sqrt{1609}}{27} \mu y + \frac{4973 - 116\sqrt{1609}}{729} \mu ^2\right) \\
                                &\qquad \cdot \left( y - \frac{58 + 2\sqrt{1609}}{27} \mu \right) + \alpha \\
                                &= \frac{9}{32 y^2} \underbrace{\left( y - \frac{58 - \sqrt{1609}}{27} \mu \right) ^2}_{\geq 0} \underbrace{\left( y - \frac{58 + 2\sqrt{1609}}{27} \mu \right)}_{\leq 0} + \alpha .
                        \end{align*}

                        Since we can choose $ y = \frac{58 - \sqrt{1609}}{27} \mu \approx 0.7\mu $, the 3rd statement becomes evident for $b\ge \frac{1}{2}$.

                        If $b< \frac{1}{2}$, $\phi_3(\mu, \frac{\mu}{4},\frac{\mu}{2})= (\frac{19}{8}-3b) \mu > (\frac{19}{8}- \frac{3}{2}) \mu >  \frac{1}{2}\mu > b\mu$.
        \end{enumerate}
\end{proof}

We now obtain immediately the following improvement of Lemma \ref{lem98}:

\begin{cor}
        \label{col99}
        Let $ T'\in \cT $ and let $ A' $ be the sub-arborescence stored with $ T' $. Then the terminals from $ T' $ can be connected to the root $ r $ in $ \mathcal{O} (|E(A')| + |T_{A'}|) $ time with total cost at most
        \begin{align*}
                \left( 1 + \frac{1609\sqrt{1609} - 42427}{34992}\mu \right) C_{A'} + \left( 1 + \frac{1}{\mu}\right) D_{A'}.
        \end{align*}
\end{cor}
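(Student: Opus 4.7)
The plan is to derive the corollary essentially as a direct specialization of the two preceding results. The lemma that bounds the re\-connection cost by $(1 + b\mu)C_{A'} + (1 + 1/\mu)D_{A'}$ requires $b$ to satisfy three uniform inequalities on the auxiliary functions $\phi_1$, $\phi_2$, $\phi_3$ across $X^\mu$ and $X^\mu \times (0,\mu]$. The characterization lemma immediately following it gives the sharp threshold on $b$ for each of these three inequalities individually: $b \ge \frac{16}{27}$ for $\phi_1$, $b \ge \frac{8}{13}$ for $\phi_2$, and $b \ge \frac{1609\sqrt{1609}-42427}{34992}$ for $\phi_3$. So the proof reduces to choosing the smallest $b$ that satisfies all three conditions simultaneously, which is the maximum of the three thresholds.

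First I would invoke the sufficient-condition lemma as a black box, stating that once such a $b$ is fixed, the arborescence $A'$ can be reconnected to $r$ in $\mathcal{O}(|E(A')| + |T_{A'}|)$ time with the advertised cost bound. Next I would invoke the three equivalences to turn the three functional conditions into three numerical lower bounds on $b$. The last remaining step is to check that $\frac{1609\sqrt{1609}-42427}{34992}$ dominates the other two thresholds; this is a bare arithmetic comparison (numerically the three values are roughly $0.593$, $0.615$, and $0.632$), but to keep the proof clean I would give an explicit certificate, e.g.\ observe that $\frac{1609\sqrt{1609}-42427}{34992} > \frac{8}{13}$ is equivalent to $1609\sqrt{1609} > 42427 + \frac{8}{13}\cdot 34992$, which can be verified by squaring both sides and comparing integers, and similarly dispatch $\frac{16}{27}$.

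With that maximum identified, setting $b := \frac{1609\sqrt{1609}-42427}{34992}$ satisfies all three hypotheses of the sufficient-condition lemma, and substituting into its cost bound yields exactly the statement of the corollary. The running-time claim is inherited verbatim from the same lemma; no additional algorithmic work is needed.

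The only substantive obstacle here is rhetorical rather than mathematical: the third threshold constant is unwieldy, and making the dominance comparison look clean in print requires a little care. Everything else, the cost estimate and the linear-time computation of the requisite auxiliary quantities (the delay weights $W_{A'_v}$, connection costs, minimum possible delay costs, and balances $\omega_{A'}$, $\omega_{A_x}$), has already been handled in the proofs of Lemma~\ref{lem98} and the sufficient-condition lemma, so the corollary genuinely is immediate.
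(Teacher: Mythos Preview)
Your proposal is correct and matches the paper's own treatment: the paper states the corollary as immediate from the two preceding lemmas, and your argument—invoke the sufficient-condition lemma with $b$ equal to the maximum of the three thresholds from the characterization lemma, then check numerically that the $\phi_3$-threshold dominates—is exactly that immediacy spelled out. The only content you add beyond the paper is the explicit verification that $\frac{1609\sqrt{1609}-42427}{34992}$ exceeds $\frac{8}{13}$ and $\frac{16}{27}$, which the paper leaves implicit.
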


\subsection{The main result}

\label{sec:main-result}

\begin{thm}
        \label{theo97}
        Given an instance $(M,c,r,T,p,w)$ of the \textsc{Uniform Cost-Distance Steiner Tree Problem}, we can compute in $ \mathcal{O} (\Lambda + |T|) $ time a Steiner tree with objective value at most
        \begin{align}
                \label{eq96}
                \left( 1 + \frac{1609\sqrt{1609} - 42427}{34992}\mu \right) C + \left( 1 + \frac{1}{\mu}\right) D,
        \end{align}

        where $ C $ is the cost of a $ \beta $-approximate minimum-length Steiner tree and $D = D_T$. Here is $ \Lambda $ the running time for computing a $ \beta $-approximate minimum Steiner tree for $ T\cup \{ r\} $.
\end{thm}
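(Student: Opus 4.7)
The plan is to combine the three main building blocks already established in the paper: Step~1--2 of Algorithm~\ref{alg:khazraei21} (to obtain the initial arborescence $A_0$, the cut-off family $\cA$, and the root component $A_r$), Corollary~\ref{col99} (to reconnect each $A' \in \cA$ individually), and Lemma~\ref{lem97} (to reconnect the root component $A_r$). The constant $b = \frac{1609\sqrt{1609} - 42427}{34992}$ appearing in the target bound is precisely the one delivered by Corollary~\ref{col99}, so the only real work is the bookkeeping that shows the bounds add up correctly over all components.

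Concretely, I would execute Step~1 of Algorithm~\ref{alg:khazraei21} in time $\Lambda$ to obtain a $\beta$-approximate Steiner arborescence $A_0$ of length $C$, then carry out Step~2 with threshold $\mu$ in time $\mathcal{O}(|T|)$, producing the family $\cA$ of cut-off arborescences together with the root arborescence $A_r$. For each $A' \in \cA$ I would invoke Corollary~\ref{col99}, connecting $T_{A'}$ to $r$ at cost at most
\[
(1 + b\mu)\, C_{A'} + \left(1 + \tfrac{1}{\mu}\right) D_{A'},
\]
and for the root component I would invoke Lemma~\ref{lem97}, yielding cost at most
\[
\left(1 + \tfrac{\mu}{2}\right) C_{A_r} + \left(1 + \tfrac{1}{\mu}\right) D_{A_r}.
\]
Summing these bounds and observing that $b > \tfrac{1}{2}$ (a numerical check shows $b \approx 0.632$, so $\tfrac{\mu}{2} \le b\mu$) gives an overall cost bound of
\[
(1 + b\mu)\!\!\sum_{A' \in \cA \cup \{A_r\}}\!\! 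C_{A'} + \left(1 + \tfrac{1}{\mu}\right)\!\!\sum_{A' \in \cA \cup \{A_r\}}\!\! D_{A'}.
\]

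To close the argument I would use that the cutting in Step~2 only removes edges of $A_0$, so $\sum_{A'} C_{A'} \le C_{A_0} = C$, while on the delay side the terminal sets of the components partition $T$, giving $\sum_{A'} D_{A'} = \sum_{t \in T} w(t)\, c(r,t) = D$. Plugging these into the preceding inequality yields exactly the target bound (\ref{eq96}). For the running time, Step~1 costs $\Lambda$, Step~2 costs $\mathcal{O}(|T|)$, and each invocation of Corollary~\ref{col99} or Lemma~\ref{lem97} costs $\mathcal{O}(|E(A')| + |T_{A'}|)$; since the arborescences partition the edges and terminals of $A_0$ (whose size is $\mathcal{O}(|T|)$ after the degree-normalization in Step~1), the total is $\mathcal{O}(\Lambda + |T|)$.

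There is no genuinely hard step here: all the combinatorial and analytic difficulty has been absorbed into Corollary~\ref{col99} (the three-way split analysis) and Lemma~\ref{lem97} (the root-component reconnection). The only potential pitfall is the sign check $b \ge \tfrac{1}{2}$, which is needed so that the root-component coefficient $\tfrac{\mu}{2}$ is dominated by $b\mu$ and can be absorbed into a single uniform bound on the connection-cost term; this is immediate from the explicit value of $b$.
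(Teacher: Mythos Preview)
Your proposal is correct and follows essentially the same approach as the paper: run Steps~1--2 of Algorithm~\ref{alg:khazraei21}, reconnect each cut-off arborescence via Corollary~\ref{col99} and the root component via Lemma~\ref{lem97}, observe that $b > \tfrac{1}{2}$ so the latter bound is dominated by the former, and sum over all components using $\sum C_{A'} \le C$ and $\sum D_{A'} = D$. The running-time accounting is also the same.
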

\begin{proof}
We run Algorithm~\ref{alg:khazraei21} with two modifications:
\begin{enumerate}
\item The arborescence $A_r$ containtaining the  root $ r $ after Step 2 may be  re-connected to the root $ r $ according to the proof of Lemma~\ref{lem97}.

\item The arborescences in $\cA$ that were cut off in Step 2  are re-connected to the root $ r $ as in Corollary~\ref{col99}.
\end{enumerate}
Observe that  the bound in Lemma~\ref{lem97} is dominated by the bound in Corollary~\ref{col99}, as $ \frac{1609\sqrt{1609} - 42427}{34992} > \frac{1}{2}$.
Finally, the total cost of the computed solution is  upper bounded  by the sum of the cost bounds for these $r$-arborescences, which is (\ref{eq96}).

For the running time analysis, we consider the individual steps of the  algorithm:

In Step 1, a $ \beta $-approximate minimum Steiner tree for $ T\cup \{ r\} $ is computed in time $ \mathcal{O} (\Lambda ) $ and transformed into the arborescence $ A_0 $ in linear time.

As in \cite{Khazraei21} we compute an initial branching in $ \mathcal{O}(|E(A_0)|) \leq \mathcal{O} (|T|) $ in Step 2.
Using Corollary~\ref{col99} and Lemma~\ref{lem97} we improve the initial branching and connect the resulting sub-arborescences to the root with a total running time of $ \mathcal{O}(|E(A_0)| + |T|)\leq \mathcal{O}(|T|) $.
\end{proof}

Finally, we choose the threshold $ \mu $ based on the  quantities $ C $ and $ D $.
Here, we consider a more general formulation that links potential  further improvement of Theorem \ref{theo97} to  a further improvement of the approximation ratio.
\begin{thm}
        \label{theo96}
        Let $ b\in \bR $ be a constant and $ \mathbf{A} $ be a polynomial algorithm for the following input
        \begin{itemize}
                \item an instance  $(M,c,r,T,p,w)$  of the \textsc{Uniform Cost-Distance Steiner Tree Problem},
                \item a $ \beta $-approximate minimum Steiner tree for $T\cup \{r\}$ with cost $ C $, and

                \item a threshold $ \mu > 0 $.
        \end{itemize}

        Let $\mathbf{A}$   compute a solution to the \textsc{Uniform Cost-Distance Steinerbaum Problem} with total cost of at most
        $$
        (1 + b\mu ) C + \left( 1 + \frac{1}{\mu}\right) D,
        $$
        where $D = D_T = \sum_{t\in T} w(t)c(r,t)$.

        Then,  algorithm $ \mathbf{A} $ can be modified (in particular by the choice of $\mu$) to  solve the \textsc{Uniform Cost-Distance Steiner Tree Problem} in polynomial time with an approximation factor of
        $$\beta + \frac{2b \beta}{\sqrt{4b\beta + (\beta - 1)^2} + \beta - 1}.$$
\end{thm}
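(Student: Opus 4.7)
The plan is to follow the same template used to derive Theorem~\ref{theo99} from Theorem~\ref{theo98}: run algorithm $\mathbf{A}$ with a threshold $\mu^*$ chosen to balance the two terms of its cost guarantee, and compare the result against the lower bound $C/\beta + D \le \OPT$. The latter holds because $C \le \beta C_{SMT}$ and because \eqref{eq:lower_bound} gives $\OPT \ge C_{SMT} + D$. Since $C,D \ge 0$, the ratio
\[
\frac{(1+b\mu) C + (1 + 1/\mu) D}{C/\beta + D}
\]
never exceeds the larger of the two coefficient ratios $\beta(1+b\mu)$ and $1 + 1/\mu$, so it suffices to minimize $\alpha(\mu) := \max\{\beta(1+b\mu),\, 1 + 1/\mu\}$.

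The first summand is strictly increasing in $\mu$ while the second is strictly decreasing, so $\alpha$ is minimized where the two meet. Setting $\beta(1+b\mu^*) = 1 + 1/\mu^*$ and multiplying through by $\mu^*$ yields the quadratic
\[
b\beta (\mu^*)^2 + (\beta - 1)\mu^* - 1 = 0,
\]
whose unique positive root is
\[
\mu^* \;=\; \frac{\sqrt{(\beta-1)^2 + 4b\beta} \;-\; (\beta-1)}{2b\beta}.
\]
Substituting back, the approximation factor equals $\alpha(\mu^*) = \beta + b\beta \mu^* = \beta + \tfrac{1}{2}\bigl(\sqrt{(\beta-1)^2 + 4b\beta} - (\beta-1)\bigr)$. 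The final step is purely algebraic: multiplying numerator and denominator of $\tfrac{1}{2}\bigl(\sqrt{(\beta-1)^2 + 4b\beta} - (\beta-1)\bigr)$ by the conjugate $\sqrt{(\beta-1)^2 + 4b\beta} + (\beta-1)$ and using $(\sqrt{(\beta-1)^2 + 4b\beta})^2 - (\beta-1)^2 = 4b\beta$ turns the expression into $\frac{2b\beta}{\sqrt{4b\beta + (\beta-1)^2} + \beta - 1}$, which matches the theorem's stated form.

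To close out the proof I would handle the degenerate cases as in Section~\ref{sec:worst_case_example}: if $C = 0$, the $\beta$-approximate Steiner tree already gives cost $0$ on every $r$-$t$-path, so no re-connection is needed and the solution is optimal; if $D = 0$, adding every zero-length edge $(r,t)$ to the $\beta$-approximate Steiner tree yields a tree with zero total delay cost and connection cost still at most $\beta C_{SMT}$, beating the claimed factor whenever $\beta$ is at most the target ratio. The main obstacle is essentially absent here — the argument is almost entirely a one-variable optimization plus the conjugate-rationalization identity — but some care is needed to verify that $\mu^*$ is a valid input for $\mathbf{A}$ (namely $\mu^* > 0$, which is immediate since the discriminant exceeds $(\beta-1)^2$) and that the formula reduces correctly in the special case $\beta = 1$, where the quadratic collapses to $b(\mu^*)^2 = 1$ and the ratio becomes $1 + \sqrt{b}$, matching the claim in the introduction.
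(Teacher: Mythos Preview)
Your argument is correct, but it is genuinely different from the paper's. The paper chooses an \emph{instance-dependent} threshold $\mu = \sqrt{D/(bC)}$, which minimizes the bound $(1+b\mu)C + (1+1/\mu)D$ itself to $C + D + 2\sqrt{bCD}$; it then has to analyze the two-variable function
\[
f(x,y) = \frac{\beta x + y + 2\sqrt{b\beta\, xy}}{x+y}
\]
over all $x,y>0$, which it does via a somewhat opaque completing-the-square identity after guessing the value $a$ of the maximum. You instead fix an \emph{instance-independent} $\mu^*$ depending only on $\beta$ and $b$, bound the ratio by the mediant inequality $\frac{(1+b\mu)C + (1+1/\mu)D}{C/\beta + D}\le \max\{\beta(1+b\mu),\,1+1/\mu\}$, and solve the resulting one-variable balancing problem via a quadratic. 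This is shorter and more transparent, and in fact makes the degenerate cases $C=0$, $D=0$ unnecessary for the bound (they fall under the same $\max$ estimate). What the paper's route buys is a strictly sharper \emph{instance-specific} cost guarantee $C+D+2\sqrt{bCD}$, which can be useful in practice even though both approaches yield the identical worst-case factor.
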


\begin{proof}
        We make the following changes to the algorithm $ \mathbf{A} $:
      If  $ C = 0 $, each $ r $-$ t $-path, $ t \in T $,  has length $ 0 $ in the initial Steiner tree. So this is already an optimal solution and we do not run the algorithm any further.

      Else if   $ D  = 0 $, we have $ w(t) = 0 $ or $ c(r,t) = 0 $ for each terminal $ t\in T $. For each terminal $ t\in T $ with $c (r,t) = 0 $ we add a shortest $ r $-$ t $-path to the initial computed Steiner arborescence. We then determine a shortest-path arborescence rooted at $ r $ in the resulting graph.
      We obtain a new $ \beta $-approximate minimal Steiner tree with delay cost $ 0 $. In particular, we have found a $ \beta $-approximate minimum cost-distance Steiner tree.

      Otherwise, set $ \mu := \sqrt{\frac{D}{bC}} $ and run algorithm $ \mathbf{A} $. Then we obtain a solution with total cost at most
        \begin{align*}
                C + D + 2\sqrt{bCD}
                \leq \beta C_{SMT} + D + 2\sqrt{b\beta C_{SMT} D},
        \end{align*}
        where $C_{SMT}$ is the length of a minimum Steiner tree for $T\cup \{r\}$.
        $ C_{SMT} + D $ is a trivial lower bound on the total cost of any solution. Therefore, the approximation factor is at most the maximum of the function $ f\colon \bR _{>0}^2 \to \bR $ given by
        \begin{align*}
                f(x,y)
                := \frac{\beta x + y + 2\sqrt{b\beta x y}}{x + y}
                = \beta + \frac{(1-\beta )y + 2\sqrt{b\beta x y}}{x + y}.
        \end{align*}

        To determine the maximum of $ f $, we first  set
                $$a
                := \frac{2b\beta}{\sqrt{4b\beta + (\beta - 1)^2} + \beta - 1}$$
        and get
        \begin{align*}
                \frac{b\beta}{a} + 1 - \beta - a
                &= \frac{1}{2} \sqrt{4b\beta + (\beta - 1)^2} - \frac{1}{2} (\beta - 1) - \frac{2b\beta}{\sqrt{4b\beta + (\beta - 1)^2} + \beta - 1} \\
                &= \frac{(\sqrt{4b\beta + (\beta - 1)^2} - (\beta - 1)) (\sqrt{4b\beta + (\beta - 1)^2} + (\beta - 1)) - 4b\beta}{2 (\sqrt{4b\beta + (\beta - 1)^2} + \beta - 1)} \\
                &= 0.
        \end{align*}

        Therefore,
        \begin{align*}
                f(x,y)
                &= \beta + \frac{(1-\beta ) y + 2\sqrt{b\beta xy}}{x + y} - \frac{\left( \frac{b\beta}{a} + 1-\beta - a\right) y}{x + y} \\
                &= \beta + \frac{2\sqrt{b\beta xy}}{x + y} - \frac{\left( \frac{b\beta}{a} - a\right) y}{x + y} \\
                &= \beta + a - \frac{ax}{x+y} + \frac{2\sqrt{b\beta xy}}{x + y} - \frac{\frac{b\beta}{a} y}{x + y} \\
                &= \beta + a - \frac{a}{x+y} \left( \sqrt{x} - \frac{\sqrt{b\beta}}{a} \sqrt{y}\right) ^2.
        \end{align*}

        As $ a\geq 0 $, we obtain
        \begin{align*}
                f(x,y)
                \leq \beta + a
                = \beta + \frac{2b\beta}{\sqrt{4b\beta + (\beta - 1)^2} + \beta - 1},
        \end{align*}
        proving the claimed approximation ratio.
\end{proof}

Applying Theorem~\ref{theo96} with $ b = \frac{2}{3}$, which is implied by   Lemma~\ref{lem98} and Lemma~\ref{lem97}, and the best known choice of $  b = \frac{1609\sqrt{1609} - 42427}{34992} $ in Theorem \ref{theo97}, we obtain the  approximation factors shown in Table~\ref{table:approx-factors} (rounded to five decimal digits) for some interesting values of $ \beta $:

\begin{table}[t]
        \centering
        \caption{Comparison of approximation factors for the \textsc{Uniform Cost-Distance Steiner Tree Problem} with different apprimation factors $ \beta $ for the minimum-length Steiner tree problem.}
        \begin{tabular}{l cccc}
                \toprule
                Parameter $ \beta $ & $ 1 $ & $ \ln (4) + \epsilon $ & $ \frac{3}{2} $ & $ 2 $ \\
                \midrule
                Algorithm~\ref{alg:khazraei21} \cite{Khazraei21}& $ 2.00000 $ & $ 2.38630 $ & $ 2.50000 $ & $ 3.00000 $ \\
                 Theorem~\ref{theo96} with $b=\frac{2}{3}$ & $ 1.81650 $ & $ 2.17371 $ & $ 2.28078 $ & $ 2.75831 $ \\
                 Theorem~\ref{theo96} with best $b$   & $ 1.79497 $ & $ 2.14887 $ & $ 2.25522 $ & $ 2.73042 $ \\
                \bottomrule
        \end{tabular}
        \label{table:approx-factors}
\end{table}

\section{Conclusion}
\label{sec:conclusion}

By inspecting worst-case instances for the approximation algorithm of
\cite{Khazraei21}, we could improve that algorithm and  the
approximation factor for the \textsc{Uniform Cost-Distance Steiner Tree
Problem}.

This was achieved by sub-dividing arborescences that were cut off even
further into up to 3 components.  A further improvement can be
expected by sub-dividing into more components. But the analysis would
probably become quite exhaustive.

Our algorithm is extremely fast. After computing an approximate
minimum-length Steiner tree, the remaining steps run in linear time,
which previously took a quadratic running time.

\printbibliography

\appendix

\section{A tight example for Theorem~\ref{theo99} in the Manhattan plane}
\label{appendix:tight-example-in-Manhattan-plane}

\begin{proof}[Alternative proof of Theorem~\ref{lem95}]
  We create an instance class in the Manhattan plane $(M,c) = (\mathbb{R}^2,\ell_1)$.

    Let $ k\in \bN _{>0} $. We set $ \epsilon ' := \frac{1}{k} $ and consider the following instance in $(\mathbb{R}^2,\ell_1)$:
    \begin{itemize}
                \item The root $ r $ lies at the point $ p(r) = (0,0)\in \bR ^2 $.

                \item $T$ contains  $ 4k + $ 7  terminals. $4k-1$ terminals  $v_i$ ($i=\{1,\dots, 4k-1\}$ placed at
                        \begin{align*}
                                p(v_i)
                                = \begin{cases}
                                        (i,0) & \text{if } i\in \{ 1,\ldots, k\} , \\
                                        (k,i-k) & \text{if } i\in \{ k+1,\ldots, 2k\} , \\
                                        (3k-i,k) & \text{if } i\in \{ 2k+1,\ldots, 3k\} , \\
                                        (0,4k-i) & \text{if } i\in \{ 3k+1,\ldots, 4k-1\} ,
                                \end{cases}
                        \end{align*}

                        and 8 terminals $w_1,\dots, w_k$, placed at
                        \begin{align*}
                                p(w_i)
                                = \begin{cases}
                                        (k-1,-i\epsilon ') & \text{if } i\in \{ 1,2\} , \\
                                        (k, - i\epsilon ') & \text{if } i\in \{ 3,4\} , \\
                                        (- (i-4)\epsilon ', k-1) & \text{if } i\in \{ 5,6\} , \\
                                        (- (i-4)\epsilon ', k) & \text{if } i\in \{ 7,8\} .
                                \end{cases}
                        \end{align*}

                \item The terminals have the following delay weights:
                        \begin{align*}
                                w(t)
                                = \begin{cases}
                                        1 & \text{if } t = v_1 \text{ or } t = v_{4k-1}, \\
                                        0 & \text{if } t\in \{ v_2, v_3, \ldots , v_{4k-2}\} , \\
                                        \frac{1}{2} + \frac{1}{k} & \text{if } t\in \{ w_1, w_2, \ldots , w_k\} .
                                \end{cases}
                        \end{align*}

                \item The distances are given by the Manhattan metric $\ell_1$.
        \end{itemize}

        \begin{figure}
                \centering
                \begin{tikzpicture}
                        \node[fsquare,label=225:$ r $] (r) at (0,0) {};
                        \node[fcircle,label=above:$ v_1 $] (v0) at (1,0) {};
                        \node[] (v1) at (2,0) {$ \ldots $};
                        \node[fcircle,label=above:$ v_ {k-1} $] (v2) at (3,0) {};
                        \node[fcircle,label=right:$ v_k $] (v3) at (4,0) {};
                        \node[fcircle,label=right:$ v_{k+1} $] (v4) at (4,1) {};
                        \node (v5) at (4,2) {$ \vdots $};
                        \node[fcircle,label=right:$ v_{2k-1} $] (v6) at (4,3) {};
                        \node[fcircle,label=right:$ v_{2k} $] (v7) at (4,4) {};
                        \node[fcircle,label=above:$ v_{2k+1} $] (v8) at (3,4) {};
                        \node (v9) at (2,4) {$ \ldots $};
                        \node[fcircle,label=above:$ v_{3k-1} $] (v10) at (1,4) {};
                        \node[fcircle,label=above:$ v_{3k} $] (v11) at (0,4) {};
                        \node[fcircle,label=right:$ v_{3k+1} $] (v12) at (0,3) {};
                        \node (v13) at (0,2) {$ \vdots $};
                        \node[fcircle,label=right:$ v_{4k-1} $] (v14) at (0,1) {};

                        \node[fcircle,label=right:$ w_1 $] (w1) at (3,-0.5) {};
                        \node[fcircle,label=right:$ w_2 $] (w2) at (3,-1) {};
                        \node[fcircle,label=right:$ w_3 $] (w3) at (4,-1.5) {};
                        \node[fcircle,label=right:$ w_4 $] (w4) at (4,-2) {};

                        \node[fcircle,label=above:$ w_5 $] (w5) at (-0.5,3) {};
                        \node[fcircle,label=above:$ w_6 $] (w6) at (-1,3) {};
                        \node[fcircle,label=above:$ w_7 $] (w7) at (-1.5,4) {};
                        \node[fcircle,label=above:$ w_8 $] (w8) at (-2,4) {};

                        \draw (v7) to (v8) to (v9) to (v10) to (v11) to (v12) to (v13) to (v14) to (r) to (v0) to (v1) to (v2) to (v3) to (v4) to (v5) to (v6);

                        \draw (v2) to (w1) to (w2);
                        \draw (v3) to (w3) to (w4);

                        \draw (v11) to (w7) to (w8);
                        \draw (v12) to (w5) to (w6);

                \end{tikzpicture}
                \caption{An optimal cost-distance Steiner tree}
                \label{fig89}
        \end{figure}
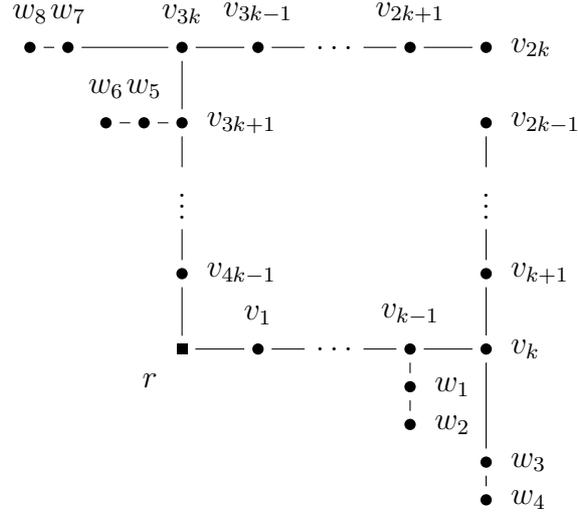

        Figure \ref{fig89} shows  an optimal Steiner tree for this instance if $ k\geq 2 $.
        It also minimizes the path length from the root $ r $ to each terminal, so it is an optimal cost-distance Steiner tree. Using
        \begin{align*}
                C_{SMT}
                &= 4k-1 + 12\epsilon '
                = 4k-1 + \frac{12}{k}
        \end{align*}

        and
        \begin{align*}
                D_{SP}
                &= 2 + 2\left( \frac{1}{2} + \frac{1}{k}\right) (2(k-1) + 3\epsilon ' + 2k + 7\epsilon ') \\
                &= 2 + \left( 1 + \frac{2}{k}\right) \left( 4k - 2 + \frac{10}{k}\right) \\
                &= 4k + 8 + \frac{6}{k} + \frac{20}{k^2}
        \end{align*}

        its total costs is therefore
        \begin{align}
                \label{eq86}
                C_{SMT} + D_{SP}
                = 8k + 7 + \frac{18}{k} + \frac{20}{k^2} .
        \end{align}

        So Algorithm~\ref{alg:khazraei21} with the minor improvements  from Section~\ref{sec:worst_case_example} chooses the threshold $ \mu $ as
        \begin{align*}
                \mu
                = \sqrt{\frac{D_{SP}}{C_{SMT}}}
                = \sqrt{\frac{4k + 8 + \frac{6}{k} + \frac{20}{k^2}}{4k - 1 + \frac{12}{k}}}.
        \end{align*}

        In particular, for $ k\geq 2 $ we  see that $ \mu \geq 1\geq \frac{1}{2} + \frac{1}{k} $ holds.
        As an upper bound we get
        \begin{align*}
                \mu
                &= \sqrt{1 + \frac{9 - \frac{6}{k} + \frac{20}{k^2}}{4k - 1 + \frac{12}{k}}} \\
                &< \sqrt{1 + \frac{16 - \frac{4}{k} + \frac{48}{k^2}}{4k - 1 + \frac{12}{k}}} \\
                &= \sqrt{1 + \frac{4}{k}} \\
                &< \sqrt{\left( 1 + \frac{2}{k}\right) ^2} \\
                &= 2\left( \frac{1}{2} + \frac{1}{k}\right).
        \end{align*}

        So if we remove the edge $ \{ r, v_1\} $ in Figure \ref{fig89} and add the edge $ \{ v_{2k-1}, v_{2k}\} $ instead, it is still an optimal Steiner tree.
        If this is computed in Step 1 of the Algorithm~\ref{alg:khazraei21}, the  cost-distance Steiner tree in Figure~\ref{fig:worst-case-example-sol-manhattan} can be computed with the above $\mu$:

        \begin{figure}
                \centering
                \begin{tikzpicture}
                        \node[fsquare,label=225:$ r $] (r) at (0,0) {};
                        \node[fcircle,label=above:$ v_1 $] (v0) at (1,0) {};
                        \node[] (v1) at (2,0) {$ \ldots $};
                        \node[fcircle,label=above:$ v_ {k-1} $] (v2) at (3,0) {};
                        \node[fcircle,label=right:$ v_k $] (v3) at (4,0) {};
                        \node[fcircle,label=right:$ v_{k+1} $] (v4) at (4,1) {};
                        \node (v5) at (4,2) {$ \vdots $};
                        \node[fcircle,label=right:$ v_{2k-1} $] (v6) at (4,3) {};
                        \node[fcircle,label=right:$ v_{2k} $] (v7) at (4,4) {};
                        \node[fcircle,label=above:$ v_{2k+1} $] (v8) at (3,4) {};
                        \node (v9) at (2,4) {$ \ldots $};
                        \node[fcircle,label=above:$ v_{3k-1} $] (v10) at (1,4) {};
                        \node[fcircle,label=above:$ v_{3k} $] (v11) at (0,4) {};
                        \node[fcircle,label=right:$ v_{3k+1} $] (v12) at (0,3) {};
                        \node (v13) at (0,2) {$ \vdots $};
                        \node[fcircle,label=right:$ v_{4k-1} $] (v14) at (0,1) {};

                        \node[fcircle,label=right:$ w_1 $] (w1) at (3,-0.5) {};
                        \node[fcircle,label=right:$ w_2 $] (w2) at (3,-1) {};
                        \node[fcircle,label=right:$ w_3 $] (w3) at (4,-1.5) {};
                        \node[fcircle,label=right:$ w_4 $] (w4) at (4,-2) {};

                        \node[fcircle,label=above:$ w_5 $] (w5) at (-0.5,3) {};
                        \node[fcircle,label=above:$ w_6 $] (w6) at (-1,3) {};
                        \node[fcircle,label=above:$ w_7 $] (w7) at (-1.5,4) {};
                        \node[fcircle,label=above:$ w_8 $] (w8) at (-2,4) {};

                        \draw (v0) to (v1) to (v2) to (v3) to (v4) to (v5) to (v6) to (v7) to (v8) to (v9) to (v10) to (v11) to (v12) to (v13) to (v14) to (r);

                        \draw (r)++(0.03,-5pt) to (0.03,-0.5) to (w1) to (w2);
                        \draw (r)++(-0.03,-5pt) to (-0.03,-1.5) to (w3) to (w4);

                        \draw (r)++(-5pt,0.03) to (-0.5,0.03) to (w5) to (w6);
                        \draw (r)++(-5pt,-0.03) to (-1.5,-0.03) to (w7) to (w8);
                \end{tikzpicture}
                \caption{A computed cost-distance Steiner tree}
                \label{fig:worst-case-example-sol-manhattan}
        \end{figure}
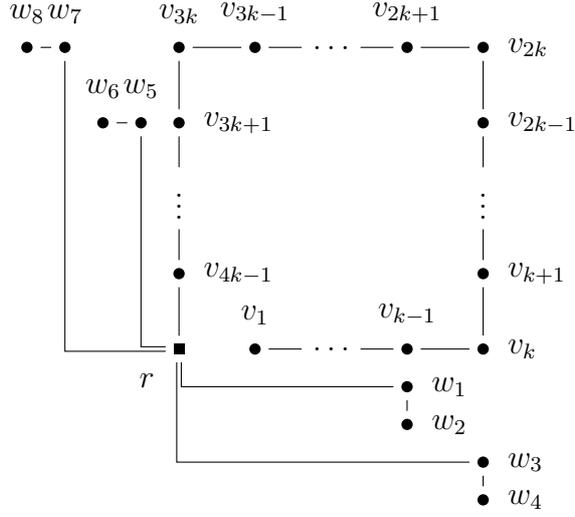

        This solution has a total cost of
        \begin{align*}
                &\hspace{-1em}(4k-1 + 4k-2 + 12\epsilon ') + \Bigl( 4k + \underbrace{2\left( \frac{1}{2} + \frac{1}{k}\right) (2(k-1) + 3\epsilon ' + 2k + 7\epsilon '}_{= 4k + 6 + \frac{6}{k} + \frac{20}{k^2}}\Bigr) \\
                &= 16k + 3 + \frac{18}{k} + \frac{20}{k^2} .
        \end{align*}

        It exceeds  the value (\ref{eq86}) of the optimum solution asymptotically by a factor of $2$ for increasing $k$.
\end{proof}

\end{document}